\documentclass[runningheads]{llncs}

\usepackage[utf8]{inputenc}
\usepackage[english]{babel}
\RequirePackage{amsfonts}
\RequirePackage{amssymb}
\RequirePackage{mathtools}
\usepackage[T1]{sansmath}
\SetMathAlphabet{\mathsfbf}{sans}{\sansmathencoding}{\sfdefault}{bx}{sl}

\usepackage{amsthm}
\usepackage{xspace}

\usepackage{xcolor}
\usepackage{textcomp}
\usepackage{listings}

\definecolor{darkgray}{rgb}{.4,.4,.4}

\lstset{
	basicstyle=\ttfamily\linespread{4},
	escapeinside=||,
}

\lstset{literate=%
	*{0}{{{\color{darkgray}0}}}1
	{1}{{{\color{darkgray}1}}}1
	{2}{{{\color{darkgray}2}}}1
	{3}{{{\color{darkgray}3}}}1
	{4}{{{\color{darkgray}4}}}1
	{5}{{{\color{darkgray}5}}}1
	{6}{{{\color{darkgray}6}}}1
	{7}{{{\color{darkgray}7}}}1
	{8}{{{\color{darkgray}8}}}1
	{9}{{{\color{darkgray}9}}}1
}

\usepackage{microtype}
\usepackage{array,tabularx}
\usepackage[pdfusetitle]{hyperref}
\usepackage[inline]{enumitem}
\usepackage{csquotes}
\usepackage{expl3}
\usepackage{ebproof}

\newcommand\hypo{\Hypo}
\newcommand\infer{\Infer}

\usepackage{subcaption}

\addto\extrasenglish{

}

\newcommand{\BNF}{\enspace \ensuremath{\Vert} \enspace} %

\DeclareMathOperator{\id}{id}

\DeclareMathOperator{\var}{var}

\newcommand{\mat}[1]{\left(\begin{smallmatrix}#1\end{smallmatrix}\right)} %
\newcommand{\zmat}{\mathbf{0}} %
\newcommand{\umat}{\mathbf{1}} %

\newcommand{\pr}{\lstinline[mathescape]}
\newcommand{\prc}{\lstinline[language=C]}

\usepackage{hyperref}
\makeatletter
\AtBeginDocument{%
	\def\doi#1{\url{https://doi.org/#1}}}
\makeatother

\ebproofnewstyle{small}{
	separation = 1em, rule margin = .5ex,
}

\newcommand*{\eg}{e.g.\@\xspace}
\newcommand*{\cf}{cf.\@\xspace}
\newcommand*{\ie}{i.e.\@\xspace}

\newcommand*{\Ie}{I.e.\@\xspace}

\makeatletter
\newcommand*{\etc}{%
	\@ifnextchar{.}%
	{etc}%
	{etc.\@\xspace}%
}
\makeatother

\usepackage{soul}

\newcommand*{\resp}{resp.\@\xspace}
\newcommand*{\wrt}{w.r.t.\@\xspace}

\renewcommand{\oplus}{+}
\renewcommand{\otimes}{\times}

\makeatletter
\renewcommand{\boxed}[1]{\text{\fboxsep=.2em\fbox{\m@th$\displaystyle#1$}}}
\makeatother

\newcommand{\vdashJK}{\vdash_{\textsc{jk}}} \allowdisplaybreaks[1] %

\makeatletter%
\begin{document}

\title{An extended and more practical mwp flow analysis\thanks{
		This material is based upon research supported by the \href{https://face-foundation.org/higher-education/thomas-jefferson-fund/}{Thomas Jefferson Fund} of the Embassy of France in the United States and the
		\href{https://face-foundation.org/}{FACE Foundation}. Thomas Rubiano and Thomas Seiller are also supported by the Île-de-France region through the DIM RFSI project "CoHOp".} 
}
\author{
	Clément Aubert\inst{1}\orcidID{0000-0001-6346-3043} \and
	Thomas Rubiano\inst{2} \and
	Neea Rusch\inst{1} \and
	Thomas Seiller\inst{2,3}\orcidID{0000-0001-6313-0898}
}
\authorrunning{C. Aubert et al.}

\institute{
	School of Computer and Cyber Sciences, Augusta University \and
	LIPN – UMR 7030 Université Sorbonne Paris Nord \and
	CNRS
}

\maketitle %

\begin{abstract}
	We improve and refine a method for certifying that the values' sizes computed by an imperative program will be bounded by polynomials in the program's inputs' sizes.
	Our work \enquote{tames} the non-determinism of the original analysis, and offers an innovative way of completing the analysis when a non-polynomial growth is found.
	We furthermore enrich the analyzed language by adding function definitions and calls, allowing to compose the analysis of different libraries and offering generally more modularity.
	The implementation of our improved method, discussed in a tool
	paper~\cite{Aubert2021f}, also required to reason about the efficiency of some of the needed operations on the matrices produced by the analysis.
	It is our hope that this work will enable and facilitate static analysis of source code to guarantee its correctness with respect to resource usages.
\end{abstract}

\keywords{Static Program Analysis \and Implicit Computational Complexity \and Automatic Complexity Analysis \and Program Verification}

\section{Introduction}
\label{sec:intro}

This work takes a step further in the implementation of static analysis methods inspired from work in implicit computational theory~\cite{DalLago2012a}, and more particularly the series of work from the so-called flow analysis of the \enquote{Copenhagen school}, notably Neil Jones, Lars Kristiansen, and Amir Ben-Amram. The \emph{Copenhagen school approach to implicit computational complexity} initiates in the celebrated \emph{size-change principle} of Ben Amram~\cite{Ben-Amram2008} to characterise termination of programs, and evolved in more precise techniques to capture resource usage and more particularly growth rate between variables' sizes.
The overall flow analysis approach is related in spirit to \emph{abstract interpretation}~\cite{Cousot1977b,Cousot1977a}; as noted by Jones~\cite{Jones1995} it bounds \emph{transitions} between states (\eg commands) instead of states. A first work implemented this technique to develop a static analysis tool detecting loop quasi-invariants~\cite{Moyen2017,Moyen2017b}.

One landmark result in this series of work is the precise and detailed analysis of the relationship between the resource requirements of a computation and the way data might flow during the computation~\cite{Jones2009}.
Thanks to a typing system resting on matrices with coefficients in the so-called mwp semi-ring, programs in a basic imperative language could be guaranteed to have variables growing at most polynomially with respect to their inputs. While this does not ensure termination, it provides a certificate that \emph{if} the program terminates, it will do so in polynomial time and using at most polynomial space.
We here continue in this direction and implement this \enquote{mwp-bounds analysis}~\cite{Jones2009} on a fragment of the \texttt{C} programming language~\cite{Aubert2021f}.

Our contribution is both of practical and theoretical interest: a \texttt{python} program analyzing \texttt{C} source code is currently being developed~\cite{Aubert2021f}, and documented at \url{https://seiller.github.io/pymwp/}.
This implementation largely benefited from the improvements discussed in the current paper and triggered the development of a modified analysis allowing for the use of more efficient algorithms to carry out the analysis.
However, our theoretical contributions can be read independently from this implementation, and answers some of the questions asked by Jones and Kristiansen~\cite[Section 1.2]{Jones2009}, notably pushing further their method.
Two of those questions are
\begin{enumerate*}
	\item Can the method be extended to richer languages?
	\item How powerful and convenient is this method?
\end{enumerate*}
We answer the first question positively, by adding treatment of function
definition and calls, and by implementing the analysis on an actual programming
language instead of a simple imperative language.
Our work suggests that to answer the second question precisely, a lot of care is needed: the analysis uses matrices in a non-deterministic way to compute the influence of variables on each others, resulting in objects growing exponentially in number.
By altering the semi-ring at the core of the original analysis, we show not only that more parsimonious methods can be used, but also that the mathematical machinery can be substituted.
While our alternative approach essentially carries out the same analysis, we improved the implementability (and in fact implemented it already), re-usability and efficiency of the techniques while illustrating that the general method could be adapted easily for different types of analysis.

\subsection{Complexity, resource growth and implementations: a brief tour}

Our approach is conceptually guided by implicit computational complexity, that generally focus on restricting \eg recursion~\cite{Bellantoni1992,Leivant1993} or type systems~\cite{Baillot2004,Lafont2004} to insure that a programming language captures a particular complexity class, or perform amortized resource analysis~\cite{Hofmann2015b}.
The particular domain concerned here, \emph{data-flow analysis}, more specifically focuses on measuring or restricting loops in imperative programs~\cite{Jones2009,Kristiansen2004,Niggl2006} and was implemented on \eg low level assembly-like programs~\cite{Moyen2009}.

However, our work is probably best compared with approaches coming from other communities sharing the same goal of finding worst case resource consumption.
Complexity analyzers of different languages, such as SPEED~\cite{Gulwani2009} for \texttt{C}++,
COSTA~\cite{Costa2007} for \texttt{Java} bytecode, RaML for \texttt{OCaml}~\cite{Lichtman2017} or
Cerco~\cite{Cerco2014} for \texttt{C} all attempts to generate (possibly
certified) cost annotation on (subsets of) programming languages in use.

\subsection{Contribution: a different take on implementing the theory}
\label{ssec:contrib}

We would like to argue that the \enquote{mwp} approach we are extending and making more practical is different from the previously mentioned implementations in four respects: \begin{enumerate*}
	\item it is focused on the \emph{growth} of variables instead of focusing on their possible values,
	\item it is modular, in the sense that the internal machinery can be altered -- as we illustrate in this paper -- without the need to re-develop large chunks of the theory,
	\item it is at the same time language-independent, as it reasons abstractly on imperative languages, and easy to port, as we illustrate with our implementation~\cite{Aubert2021f},
	\item it focuses on characterizations of \enquote{chunks} of any size of the program allowing to abstract values and their encoding.
\end{enumerate*}

\section{Background: the original flow analysis}

We here quickly recall the original syntax of the imperative language, that we will extend with function call and definition in \autoref{composition}, then the original analysis by Jones and Kristiansen and its mathematical machinery.

\subsection{Language analyzed: fragments of imperative language}
\label{subsec:language}

We will be using the following imperative programming language, where variables range over \pr|R|, \pr|X|, \pr|X'|, \pr|Y|, \pr|Z| and \pr|Xi| for \pr|i| \(\in \mathbb{N}\), and need not to be declared, and given the binary operations on expression \(-\), \(+\), and \(\times\), and on booleans \(\bullet\) (such as \(\wedge\), \(\vee\), \etc), and a collection of comparison operators between expressions\(\square\). %

\begin{align*}
	\text{\pr|e|} \coloneqq & \text{\pr|X|} \BNF \text{\pr|e - e|} \BNF \text{\pr|e + e|} \BNF \text{\pr|e * e|} \tag{Expression} \\
	\text{\pr|b|} \coloneqq & \text{\pr|e $\square$ e|} \BNF \text{\pr|b $\bullet$ b|} \tag{Boolean expression}                   \\
	\text{\pr|C|} \coloneqq & \text{\pr|X = e|} \BNF \text{\pr|if b then C else C|} \BNF \text{\pr|while b do \{C\}|}             \\
	                        & \text{\pr|loop X \{C\}|} \BNF \text{\pr|C ; C|} \tag{Command}
\end{align*}

The semantics is straightforward, with \pr|loop X {C}| meaning \enquote{do \pr|C| \pr|X| times} and \pr|C;C| being used for sequentiality (\enquote{do \pr|C|, then \pr|C|}).
We generally write \enquote{program} for a series of commands composed sequentially.

\subsection{A Flow Calculus of mwp-Bounds for Complexity Analysis}
\label{ssec:a-flow-calculus}

The original paper~\cite{Jones2009} studies \emph{flows} between variables in imperative programs, that are of three types: \emph{m}aximum, \emph{w}eak polynomial and \emph{p}olynomial flows characterize the three forms of controls from one variable to another, with increasing growth rate\footnote{Note that \(0\) is also a possible type, corresponding to the absence of any dependency.}.
The programs are written in (a variation on) the language presented in \autoref{subsec:language}, and %
the bounds are represented and calculated thanks to vector and matrices whose coefficients are elements of the mwp semi-ring.

\begin{definition}[mwp semi-ring, matrix algebra]
	\label{def:mwp-matrix-alg}
	Letting \(\textsc{mwp} = \{0, m, w, p\}\) with \(0 < m < w < p\), and \(\alpha\), \(\beta\), \(\gamma\) range over \(\textsc{mwp}\), the \emph{mwp semi-ring} \((\textsc{mwp}, 0, m, +, \times)\) is defined with \(+ = \max\), \(\alpha \times \beta = \max (\alpha, \beta)\) if \(\alpha, \beta \neq 0\), and \(0\) otherwise.

	Fixing a natural number $n$, we use \(M\), \(A\), \(B\), \(C\), \(\hdots\) to denote \(n \times n\) matrices over \(\textsc{mwp}\), \(M_{ij}\) for the coefficient in the \(i\)th row and \(j\)th column of \(M\), \(\oplus\) for the component wise addition, and \(\otimes\) for the product of matrices defined in a standard way.
	The \(\zmat\)-element for the addition is \(\zmat_{ij} = 0\) for all \(i, j\), and the \(\umat\)-element for product is \(\umat_{ii} = m\), \(\umat_{ij} = 0\) if \(i \neq j\), and the resulting structure \(\mathbb{M}(\textsc{mwp})\) is a semi-ring.
	Finally, \(M^0 = \umat\), \(M^{m+1} = M \otimes M^m\) and the closure operator \(\cdot^{*}\) is defined as \(M^* = \umat \oplus M \oplus (M^2) \oplus \hdots\).
\end{definition}

Although not crucial to understand our development, details about strong semi-rings and the mwp semi-ring can be found in \autoref{sec:app:mwp}, and the general construction of a semi-ring whose elements are matrices with coefficients in a different semi-ring -- so, in particular, \(\mathbb{M}(\textsc{mwp})\) -- is given in \autoref{sec:app:matrix}.

Below, we let \(V_1\), \(V_2\) be column vectors with values in \(\textsc{mwp}\), \(\alpha V_1\) to be the usual scalar product, and \(V_1 \oplus V_2\) to be defined component-wise.
We write \(\{_{i}^{\alpha}\}\) for the vector with \(0\) everywhere except for \(\alpha\) in its \(i\)th row, and \(\{_{i}^{\alpha} , _{j}^{\beta}\}\) for \(\{_{i}^{\alpha}\} \oplus \{_{j}^{\beta}\}\).
Given a matrix \(M\) and a vector \(V\), \(M \xleftarrow{j} V\) is \(M\) with the \(j\)th column vector replaced by \(V\).
We write \(\{_{i}^{\alpha} \rightarrow j\}\) for the matrix \(M\) with \(M_{ij} = \alpha\) and \(0\) everywhere else, and \(\var(\text{\pr|e|})\) for the set of variables in the expression \pr|e|.
In the analysis, the assumption is made that exactly \(n\) different variables are manipulated throughout the analyzed program, \(n\)-vectors are assigned to expressions and \(n \times n\) matrices are assigned to commands using rules reminded in \autoref{fig:orig-rules}~\cite[Section 5]{Jones2009}.

\begin{figure}
	\begin{subfigure}{\textwidth}
		\begin{centering}
			\begin{tabular}{c c c}
				\begin{prooftree}[small]
					\infer0[E1]{ \vdashJK \text{\pr|Xi|} : \{_{\text{\pr|i|}}^{m}\}}
				\end{prooftree}
				 & \hspace{1em} &
				\begin{prooftree}[small]
					\infer0[E2]{ \vdashJK \text{\pr{e}} : \{ _{\text{\pr|i|}}^{w} \mid \text{\pr|Xi|} \in \var(\text{\pr{e}}) \}}
				\end{prooftree}
				\\[2em]
				\begin{prooftree}[small]
					\infer0{\vdashJK \text{\pr{e1}} : V_1}
					\infer0{\vdashJK \text{\pr{e2}} : V_2}
					\infer[left label={$\star\in\{+, -\}$}]2[E3]{\vdashJK \text{\pr|e1 $\star$ e2|} : pV_1 \oplus V_2}
				\end{prooftree}
				 &              &
				\begin{prooftree}[small]
					\infer0{\vdashJK \text{\pr{e1}} : V_1}
					\infer0{\vdashJK \text{\pr{e2}} : V_2}
					\infer[left label={$\star\in\{+, -\}$}]2[E4]{\vdashJK \text{\pr|e1 $\star$ e2|} : V_1 \oplus pV_2}
				\end{prooftree}
			\end{tabular}

			\caption{Rules for assigning vectors to expressions}
			\label{fig:rules-expressions}
		\end{centering}
	\end{subfigure}
	\\[2em]
	\begin{subfigure}{\textwidth}
		\begin{centering}
			\begin{prooftree}[small]
				\hypo{ \vdashJK \text{\pr{e}} : V}
				\infer1[A]{\vdashJK \text{\pr|Xj = e|} : \umat \xleftarrow{\text{\pr|j|}} V}
			\end{prooftree}
			\hspace{2em}
			\begin{prooftree}[small]
				\hypo{ \vdashJK \text{\pr{C1}} : A}
				\hypo{ \vdashJK \text{\pr{C2}} : B}
				\infer2[C]{\vdashJK \text{\pr{C1 ; C2}} : A \otimes B}
			\end{prooftree}
			\\[2em]
			\begin{prooftree}[small]
				\hypo{ \vdashJK \text{\pr{C1}} : A}
				\hypo{ \vdashJK \text{\pr{C2}} : B}
				\infer2[I]{\vdashJK \text{\pr|if b then C1 $\text{\hspace{.2em}}$ else C2|} : A \oplus B} %
			\end{prooftree}
			\\[2em]
			\begin{prooftree}[small]
				\hypo{\vdashJK \text{\pr|C|} : M}
				\infer[left label={\(\forall i, M_{ii}^* = m\)}]1[L]{\vdashJK \text{\pr|loop Xl \{C\}|} : M^* \oplus \{_{\text{\pr|l|}}^{p}\rightarrow j \mid \exists i, M_{ij}^* = p\}}
			\end{prooftree}
			\\[2em]
			\begin{prooftree}[small]
				\hypo{\vdashJK \text{\pr|C|} : M}
				\infer[left label={\(\forall i, M_{ii}^* = m\) and \(\forall i, j, M^*_{ij} \neq p\)}]1[W]{\vdashJK \text{\pr|while b do \{C\}|} : M^*}
			\end{prooftree}
			\caption{Rules for assigning matrices to commands}
			\label{fig:rules-commands}
		\end{centering}
	\end{subfigure}
	\caption{Original (\enquote{\textsc{J}ones-\textsc{K}ristiansen}) rules}
	\label{fig:orig-rules}
\end{figure}

The intuition is that if \(\vdashJK \text{\pr|C|}:M\) can be derived using these rules, then all the values computed by \pr|C| will grow at most polynomially \wrt its inputs: this is the core and powerful result of the original paper~\cite[Theorem 5.3]{Jones2009}.
Furthermore, the coefficient at \(M_{\text{\pr|ij|}}\) carries quantitative information about the way \pr|Xi| depends on \pr|Xj|, knowing that \(0\)- and \(m\)-flows are harmless and without constraints, but that \(w\)- and \(p\)- flows are more harmful \wrt polynomial bounds and need to be handled with care, particularly when used in loops -- hence the condition on the L and W rules.
Although simple in appearance, the proof techniques are far from trivial, but the relative simplicity of the derivation and of the matrices manipulated make the analysis flexible and easy to carry.

In fact, the assignment may fail in case of loops---that is, some programs may not be assigned a matrix---, as at least one of the variables used in the body of the loop may depend \enquote{too strongly} upon another, making it impossible to ensure polynomial bounds on the loop itself (as iterating the body can lead to super-polynomial dependencies).

To capture a larger class of programs, the calculus used to assign a matrix to a program -- that corresponds to a proof search in a derivation system -- is non-deterministic.
As a consequence, multiple matrices---hence, multiple polynomial bounds---may be assigned to the same program.

We will use the following example (of \enquote{iteration-dependent} loop~\cite[Example 3.4]{Jones2009}) as a common basis to discuss our improvements.
\begin{example}
	\label{ex:start}
	Consider the command \pr|loop X3{X2 = X1 + X2}|.
	The body of the loop -- the expression \pr|X1 + X2| -- admits 3 different derivations that we name \(\pi_0\), \(\pi_1\) and \(\pi_2\):

	{\center

	\begin{prooftree}[small]
		\infer0[E1]{ \vdashJK \text{\pr|X1|} : \mat{m\\0\\0}}
		\infer0[E1]{ \vdashJK \text{\pr|X2|} : \mat{0\\m\\0}}
		\infer2[E3]{ \vdashJK \text{\pr|X1 + X2|} : \mat{p\\m\\0}}
	\end{prooftree}
	\hfill
	\begin{prooftree}[small]
		\infer0[E1]{ \vdashJK \text{\pr|X1|} : \mat{m\\0\\0}}
		\infer0[E1]{ \vdashJK \text{\pr|X2|} : \mat{0\\m\\0}}
		\infer2[E4]{ \vdashJK \text{\pr|X1 + X2|} : \mat{m\\p\\0}}
	\end{prooftree}
	\\[2em]
	\begin{prooftree}[small]
		\infer0[E2]{ \vdashJK \text{\pr|X1 + X2|} : \mat{w\\w\\0}}
	\end{prooftree}

	}

	From \(\pi_0\), the derivation of \pr|loop X3{X2 = X1 + X2}| can be completed, but since the L rule requires to have only \(m\) coefficients on the diagonal, \(\pi_1\) cannot be used to complete the derivation, because of the \(p\) coefficient in a box below:

	\begin{prooftree}[small]
		\hypo{}
		\ellipsis{\(\pi_0\)}{ \vdashJK \text{\pr|X1 + X2|} : \mat{p\\m\\0}}
		\infer1[A]{ \vdashJK \text{\pr|X2 = X1 + X2|} : \mat{m&p&0 \\ 0&m&0 \\0&0&m}}
		\infer1[L]{ \vdashJK \text{\pr|loop X3 \{X2 = X1 + X2\}|} : \mat{m&p&0 \\ 0&m&0 \\0&p&m}}
	\end{prooftree}
	\hfill
	\begin{prooftree}[small]
		\hypo{}
		\ellipsis{\(\pi_1\)}{ \vdashJK \text{\pr|X1 + X2|} : \mat{m\\p\\0}}
		\infer1[A]{ \vdashJK \text{\pr|X2 = X1 + X1|}: \mat{m&m&0 \\ 0& %
				\boxed{p} &0 \\0&0&m}}
	\end{prooftree}

	Similarly, because of the \(w\) coefficient on the diagonal after applying A, \(\pi_2\) cannot be used to complete the derivation either, and hence only one derivation for this command holds.
	Note that in general, multiple derivations can exist and that this \enquote{indeterminacy}~\cite[Section 8]{Jones2009} is needed to capture as many programs as possible.
\end{example}

\section{"Taming" non-determinism and non-termination}

The first two improvements over the existing analysis we offer are to:
\begin{enumerate}[nolistsep,noitemsep]
	\item \enquote{internalize} the non-determinism, so that at most one matrix per command is produced,
	\item \enquote{internalize} the failure, so that at least one matrix per command is produced.
\end{enumerate}
These changes were introduced first to obtain an efficient (or, actually, to simply enable an) implementation, but they came with by-products. %
Indeed, the naive approach consisting in producing a list of all possible matrices corresponding to all the non-deterministic choices (and removing those matrices for which the analysis fails) would result in a very slow implementation even for small programs.

To represent non-determinism, we use in the matrices \emph{functions from choices to coefficients in \(\textsc{mwp}\)} instead of simply coefficients in \(\textsc{mwp}\).
This is explained by the following remark -- made formal in \autoref{sec:app:choice}: the overall analysis produces a function from a space of choices \(C\) to the space \(\mathbb{M}(\textsc{mwp})\) of matrices over the mwp semi-ring, \ie it results in a function \(C\rightarrow \mathbb{M}(\textsc{mwp})\). But there is a semi-ring isomorphism between \(C\rightarrow \mathbb{M}(\textsc{mwp})\) and \(\mathbb{M}(C\rightarrow \textsc{mwp})\), \ie matrices whose coefficients are functions from choices to the mwp semi-ring. We use this, together with a clever representation of the space \(C\rightarrow \mathbb{M}(\textsc{mwp})\) to provide an alternative formalism allowing for more efficient implementation. Moreover, compacting all the possible derivations into one matrix results in a gain of space and time as different matrices obtained from different choices are \emph{more or less} the same, \ie they usually differ only on a few coefficients, leading to a quite compact representation. As a side-product, this also allows the user to be presented with different polynomial bounds, so that they can pick the one that suits their needs. %

Concerning failure, we extend the mwp semi-ring with a special value \(\infty\); one key point is that the resulting structure is \emph{not} a \emph{strong} semi-ring--as opposed to mwp or \(\mathbb{M}(\textsc{mwp})\)--because the latter structure requires the equality \(0\times \infty=0\) to hold while we need \(0\times \infty=\infty\) to avoid overlooking some super-polynomial computations: if part of the program computes an exponential value but then throws it away, \(0\times \infty=0\) would hide the super-polynomial computation, resulting in an incorrect analysis\footnote{Here we can be a bit more detailed: while throwing away the infinite coefficient would hide the super-polynomial computation, it would not contradict the \emph{ultimately} polynomial dependency of the values w.r.t. the inputs. As such, \(0\times \infty=0\) could still be used to bounds values, at the cost of losing the bounds on time and space usage for terminating programs. A modular implementation allowing to decide which structure to use in under progress.}. This way of representing failure also has the advantage of being local, so that which input variable impacts which variable in a non-polynomial way can be precisely pinpointed.
We believe this feature can be of crucial use in a situation where some variables are known to be of small size, hence where a non-polynomial bound \emph{on particular input variables} is acceptable.

Taken together, our improvements insure that exactly one matrix will always be assigned to a program, but also gives an opportunity to chose between \enquote{the lesser of two evils} when it fails: if two derivations produce \(\infty\) coefficients \emph{on different flows}, the user could decide to privilege one over the other based on knowledge about the inputs' sizes.

We give in \autoref{fig:new-rules} the alternative system we are introducing in full, but will gently discuss it though the remaining parts of this section and in \autoref{composition}: note that the A, C and I rules are unchanged (even if the sum and product are in a different semi-ring) and that the call rule is new.

\begin{figure}
	\begin{subfigure}{\textwidth}
		\begin{centering}
			\begin{prooftree}[small]
				\infer[left label={$\star\in\{+, -\}$}]0[E\(^{\textsc{A}}\)]{\vdash \text{\pr|Xi $\star$ Xj|} : (0 \mapsto \{_{\text{\pr|i|}}^{m}, _{\text{\pr|j|}}^{p}\}) \oplus (1 \mapsto \{_{\text{\pr|i|}}^{p}, _{\text{\pr|j|}}^{m}\}) \oplus (2 \mapsto \{_{\text{\pr|i|}}^{w}, _{\text{\pr|j|}}^{w}\})}
			\end{prooftree}
			\\[2em]
			\begin{prooftree}[small]
				\infer0[E\(^{\textsc{M}}\)]{ \vdash \text{\pr|Xi * Xj|} : \{ _{\text{\pr|i|}}^{w}, _{\text{\pr|j|}}^{w} \}}
			\end{prooftree}
			\caption{New rules for assigning vectors to expressions}
			\label{fig:new-rules-expressions}
		\end{centering}
	\end{subfigure}

	\begin{subfigure}{\textwidth}
		\begin{centering}
			\begin{prooftree}[small]
				\hypo{ \vdash \text{\pr{e}} : V}
				\infer1[A]{\vdash \text{\pr|Xj = e|} : \umat \xleftarrow{\text{\pr|j|}} V}
			\end{prooftree}
			\hfill
			\begin{prooftree}[small]
				\hypo{ \vdash \text{\pr{C1}} : A}
				\hypo{ \vdash \text{\pr{C2}} : B}
				\infer2[C]{\vdash \text{\pr{C1 ; C2}} : A \otimes B}
			\end{prooftree}
			\hfill
			\begin{prooftree}[small]
				\hypo{ \vdash \text{\pr{C1}} : A}
				\hypo{ \vdash \text{\pr{C2}} : B}
				\infer2[I]{\vdash \text{\pr|if b then C1 $\text{\hspace{.2em}}$ else C2|} : A \oplus B} %
			\end{prooftree}
			\\[2em]
			\begin{prooftree}[small]
				\hypo{ \vdash \text{\pr|C|}: M }
				\infer1[L\(^{\infty}\)]{\vdash \text{\pr|loop Xl \{C\}|} : M^* \oplus \{_{j}^{\infty} \rightarrow j \mid M^*_{jj} \neq m\} \oplus \{_{\text{\pr|l|}}^{p}\rightarrow j \mid \exists i, M_{ij}^* = p\} } %
			\end{prooftree}
			\\[2em]
			\begin{prooftree}[small]
				\hypo{ \vdash \text{\pr|C|}: M }
				\infer1[W\(^{\infty}\)]{\vdash \text{\pr|while b do \{C\}|} : M^* \oplus \{_{j}^{\infty} \rightarrow j \mid M^*_{jj} \neq m\} \oplus \{_{i}^{\infty}\rightarrow j \mid M_{ij}^* = p\} } %
			\end{prooftree}
			\\[2em]
			\begin{prooftree}[small]
				\Infer0[\(\mathrm{call}\)]{\vdash \text{\pr|Xi = F(X1,$\hdots$, XN)|} : 1 \xleftarrow{\text{\pr|i|}}((0 \mapsto M(f)_0) \oplus \dots \oplus (k \mapsto M(f)_k))}
			\end{prooftree}
		\end{centering}
		\caption{New rules for assigning matrices to commands}
		\label{fig:new-rules-commands}
	\end{subfigure}
	\caption{New rules}
	\label{fig:new-rules}
\end{figure}

\subsection{Choice data flow semi-rings}

The first step towards our \enquote{internalization of choice} is to design the correct semi-ring.
We start by reasoning abstractly, the detail of this construction is given in \autoref{sec:app:choice}.
Given a strong semi-ring \(\mathbb{S}\), we define \(\mathbb{M} (\mathbb{S})\) to be the strong semi-ring whose elements are matrices with coefficients in \(\mathbb{S}\) (\autoref{lem:matrices}), similarly to the matrix algebra of \autoref{def:mwp-matrix-alg}.
We also define \(A \to \mathbb{S}\) to be the strong semi-ring whose elements are functions from a set (of choices) \(A\) to \(\mathbb{S}\) (\autoref{lem:functions}).
We furthermore observe (\autoref{lem:semi-ring-iso}) %
that for all set \(A\) and strong semi-ring \(\mathbb{S} \), \(\mathbb{M} (A \to \mathbb{S} )\) and \(A \to \mathbb{M} (\mathbb{S} )\) are isomorphic (\autoref{def:iso}). By chosing \(A=\prod_{i=1}^p A_i\), it follows that there exists an isomorphism
\[
	\mathbb{M} (\prod_{i=1}^p A_i \to \mathbb{S} ) \cong \prod_{i=1}^p A_i \to \mathbb{M} (\mathbb{S} )
\]
for all family of sets \((A_i)_{i=1,\dots,p}\), using the usual cartesian product of sets.
This dual nature of the semi-ring considered will be useful:
\begin{itemize}[nolistsep,noitemsep]
	\item we implement the analysis by assigning elements of \(\mathbb{M} (\prod_{i=1}^p A_i \to \textsc{mwp} )\), this allows for a more efficient implementation by using some clever representation of elements of \(\prod_{i=1}^p A_i \to \textsc{mwp}\) detailed in \autoref{sec:implementation};
	\item we use the representation of the resulting matrix \(M\) as an element of \(\prod_{i=1}^p A_i \to \mathbb{M} (\textsc{mwp} )\) to produce, from an \emph{assignment} \(\alpha=(a_1,a_2,\dots,a_p)\in \prod_{i=1}^p A_i\), a matrix \(M[\alpha]\in \mathbb{M} (\textsc{mwp} )\), recovering the \emph{mwp}-flow that would have been computed by making the choices \(a_1,a_2,\dots\) in the derivation.
\end{itemize}

\begin{remark}
	As the unique degree of non-determinism in the rules to assign a matrix to commands is 3 at this point (\cf \autoref{ex:start}), our modification of the analysis flow consists simply (for the moment) in recording the different choices by letting \(A_i = \{0, 1, 2\}\) for all \(i=1,\dots,p\) where \(p\) is the number of times a choice had to be taken. Note that in a later section, other sets \(A_i\) will be used in order to deal with \emph{function calls}.
\end{remark}

\begin{example}
	Re-using the derivations \(\pi_0\), \(\pi_1\) and \(\pi_2\) from \autoref{ex:start}, we can now represent the three vectors \(\mat{p\\m\\0}\), \(\mat{m\\p\\0}\) and \(\mat{w\\w\\0}\) with a single vector
	\[
		\mat{\{ 0 \mapsto p, 1 \mapsto m, 2 \mapsto w\} \\ \{0 \mapsto m, 1 \mapsto p, 2 \mapsto w\} \\ 0}
	\]

	Where we make the abuse of notation of writing \(0\) for \(\{0 \mapsto 0, 1 \mapsto 0, 2 \mapsto 0\}\).\footnote{The implementation supports both coefficients from \(\textsc{mwp}\) \emph{and} coefficients from \(\{0, 1, 2\}^m \to \textsc{mwp}\), \cf \eg \href{https://seiller.github.io/pymwp/demo/\#basics_assign_expression.c}{a simple assignment \texttt{assign\_expression} example}.}
	Since, in particular\footnote{This is a variant of \autoref{lem:semi-ring-iso}. While the latter lemma is stated for an algebra of square matrices, a similar result holds for rectangular matrices of a fixed size; the algebraic structure is no longer that of a semi-ring as rectangular matrices do not possess a proper multiplication, but the proof can be adapted to show the existence of an isomorphism of modules between the considered spaces.}, \(\mathbb{M}(\{0, 1, 2\} \to\textsc{mwp}) \cong \{0, 1, 2\} \to \mathbb{M}(\textsc{mwp})\), the obtained vector can be rewritten as \(0 \mapsto \mat{p\\m\\0}, 1 \mapsto \mat{m\\p\\0}, 2 \mapsto \mat{w\\w\\0}\).

\end{example}

Our derivation system replaces the E3 and E4 rules with a single rule E\(^{\textsc{A}}\) (for \enquote{additive}), and imposes an additional restriction on E2, thus giving E\(^{\textsc{M}}\) (for \enquote{multiplicative}), so that it is used \emph{only} when E1 followed by E2 or E3 cannot be applied.

The implementation of binary additive operators (\(-\) and \(+\)) with E\(^{\textsc{A}}\) captures all possible choices for distinct operands and merges \pr|i| and \pr|j| into a single coefficient when \(\text{\pr|i|}= \text{\pr|j|}\).
Binary multiplication is handled by applying the E\(^{\textsc{M}}\) rule -- note that the application of E2 to additive operators in the original system is still handled by the last choice present in E\(^{\textsc{A}}\).
Given this need to treat binary operations differently, based on operators and combinations of operands, more work is needed to handle statements of greater arity.
As the implementation already processes abstract syntax trees of \texttt{C} commands recursively, handling operations of greater arity will require implementing additional recursive steps, but we do not expect that to be problematic conceptually or at the level of implementation.
At the light of this reflection, and knowing that there is no benefit in applying E2 to a single variable, as it result in a \(w\) coefficient being applied in lieu of a lesser \(m\) coefficient, it is easy to observe that E\(^{\textsc{A}}\) and E\(^{\textsc{M}}\) are as expressive as E1, E2, E3 and E4 taken together -- something we will be using when proving the equi-expressiveness of our system (\autoref{lem:equi-expr}).

\subsection{Representing failure with an \enquote{infinity} coefficient}

The original analysis would stop whenever a non-polynomial flow was detected, putting an end to the chosen strategy (\ie set of choices) and restarting from scratch with another one.
We will now discuss the fact that every derivation can be completed even in the presence of non-polynomial flows, which constitutes our second improvement.
This is done by first extending the mwp semi-ring with a new element. While this approach results in derivations for program where some variables \emph{are not} polynomially related to their inputs, we argue that pinpointing which variables are \enquote{faulty} from within the analysis can have benefits.

The first step is to incorporate a top element
\(\infty\) into our semi-rings to represent undefined elements. %
The semi-ring \(\textsc{mwp}^\infty\) we will be using is hence \((\textsc{mwp} \cup \{\infty\}, 0, m, +^{\infty}, \times^{\infty})\), with \(\infty > \alpha\) for all \(\alpha \in \textsc{mwp}\), \(+^{\infty}=\max\) as before, and \(\alpha \times^{\infty} \beta = 0\) if \(\alpha, \beta \neq \infty\) and \(\alpha\) or \(\beta\) is \(0\), \(\max(\alpha, \beta)\) otherwise.
This different condition in the definition of \(\times^{\infty}\) insures that once non-polynomial flows have been detected, they cannot be erased (as \(\infty \times^{\infty} 0 = \infty\)), but comes at the price of the strength of the semi-ring (the details are discussed in \autoref{sec:app:partiality}).

Below, we will work with \(\mathbb{M}(\textsc{mwp}^\infty)\), write \(\times\) for \(\times^{\infty}\) and similarly for \(+\), and remind the reader that we write \(\{_{i}^{\alpha} \rightarrow j\}\) for the matrix \(M\) with \(M_{ij} = \alpha\) and \(0\) everywhere else.
The only cases where the original analysis may fail is if the side condition of L or W (\autoref{fig:orig-rules}) are not met; we now replace those by the rules L$^\infty$ and W$^\infty$ of \autoref{fig:new-rules}, with no side condition.

Those rules, which can always be applied, simply replace the problematic coefficients with \(\infty\). Note that in the cases for which the original rule is applicable, the results coincide.
This will be essential to prove that our modified analysis is coherent with Jones and Kristiansen's original approach (\autoref{lem:equi-expr}).

\subsection{Merging the two improvements: illustration with operations}

We introduced and discussed the deviations from the original system for the \enquote{axiomatic} / \enquote{expression} (E\(^{\textsc{A}}\), E\(^{\textsc{M}}\)) and \enquote{loop} rules (L$^\infty$ and W$^\infty$), but remains to briefly discuss the rules for assignment (A), \pr|if| (I) and the composition (C), that remained unchanged.
Those rules are the place where both improvements meet.
Mathematically speaking, adopting the semi-ring defined over matrices using coefficients in \(\{0, 1, 2\}^m \to \textsc{mwp} \cup \{\infty\}\) is fairly simple, but computationally speaking, simple operations like multiplication and addition of matrices become very costly and memory-demanding.
This became particularly problematic when keeping a usable implementation in mind, and is illustrated below.

\begin{example}
	In our new system, consider the following derivation:
	\begin{center}

		\begin{prooftree}[small]
			\infer0[E\(^{\textsc{A}}\)]{\vdash \text{\pr|X1 + X2|} : V}
			\infer1[A]{\vdash \text{\pr|X1 = X1 + X2|} : 1 \xleftarrow{1} V}
			\infer0[E\(^{\textsc{A}}\)]{\vdash \text{\pr|X1 - X3|} : V'}
			\infer1[A]{\vdash \text{\pr|X1 = X1 - X3|} : 1 \xleftarrow{1} V'}
			\infer2[I]{\vdash \text{\pr|if b then \{X1 = X1 + X2\} else \{X1 = X1 - X3\}|} : (1 \xleftarrow{1} V) + (1 \xleftarrow{1} V')}
		\end{prooftree}

	\end{center}

	with

	\begin{align*}
		V                  & = 0 \mapsto \{_{\text{\pr|1|}}^{m}, _{\text{\pr|2|}}^{p}\} \oplus 1 \mapsto \{_{\text{\pr|1|}}^{p}, _{\text{\pr|2|}}^{m}\} \oplus 2 \mapsto \{_{\text{\pr|1|}}^{w}, _{\text{\pr|2|}}^{w}\}         \\
		V'                 & = 0 \mapsto \{_{\text{\pr|1|}}^{m}, _{\text{\pr|3|}}^{p}\} \oplus 1 \mapsto \{_{\text{\pr|1|}}^{p}, _{\text{\pr|3|}}^{m}\} \oplus 2 \mapsto \{_{\text{\pr|1|}}^{w}, _{\text{\pr|3|}}^{w}\}         \\
		1 \xleftarrow{1} V & = \mat{m                                                                                                                                                                                   & 0 & 0 \\ 0 & m & 0 \\ 0 & 0 & m} \xleftarrow{1} V \cong \mat{(0 \mapsto m) \oplus (1 \mapsto p) \oplus (2 \mapsto w) & 0 & 0 \\ (0 \mapsto p) + (1 \mapsto m) + (2 \mapsto w) & m & 0 \\ 0 & 0 & m} \\
		1 \xleftarrow{1} V & = \mat{m                                                                                                                                                                                   & 0 & 0 \\ 0 & m & 0 \\ 0 & 0 & m } \xleftarrow{1} V' \cong \mat{(0 \mapsto m) \oplus (1 \mapsto p) \oplus (2 \mapsto w) & 0 & 0 \\ 0 & m & 0 \\ (0 \mapsto p) \oplus (1 \mapsto m) \oplus (2 \mapsto w) & 0 & m}
	\end{align*}
	Now, to perform the addition required by the I rule, some care is needed: indeed, the choices in the left branch of the derivation are independent from the choices in the right branch, and we must use coefficients in \(\{0, 1, 2\}^2 \to \textsc{mwp}\) to represent the \(2^3\) choices.
	Assuming the choice in the left branch is first, we obtain \eg for the beginning of the top-left coefficient (the complete coefficient will be given below, once we introduced a more compact notation):
	\[(0 \mapsto (0 \mapsto (m + m = m)))
		+ (0 \mapsto (1 \mapsto (m + p = p)))
		+ (0 \mapsto (2 \mapsto (m + w = w)))
	\]
	Writing \(ab \mapsto\) for \(a \mapsto b \mapsto\), with \(a, b \in \{0, 1, 2\}\), and \(a\square \mapsto\) (\resp \(\square a \mapsto\)) if the second (\resp first) choice has no impact on the resulting coefficient, we can let:
	\[ A = 00\mapsto m + 01 \mapsto p + 02 \mapsto w + 1 \mapsto p + 20 \mapsto w + 21 \mapsto p + 22 \mapsto w\]
	to obtain
	\[(1 \xleftarrow{1} V) + (1 \xleftarrow{1} V') = \mat{A & 0 & 0 \\ (0\square \mapsto p) + (1\square \mapsto m) + (2\square \mapsto w) & m & 0 \\ (\square 0 \mapsto p) \oplus (\square 1 \mapsto m) \oplus (\square 2 \mapsto w) & 0 & m}\]

	Although the presentation and numbering diverge a bit, the example at \url{https://seiller.github.io/pymwp/demo/#improvement_paper_example3.c} can help the curious reader to check that the implementation reflects this derivation correctly.

\end{example}

\begin{example}
	Re-using \autoref{ex:start}, we now obtain in our new system a derivation that assign to \pr|loop X3 {X2 = X1 + X2}| the unique matrix
	\[
		\mat{m & (0 \mapsto p) + (1 \mapsto m) + (2 \mapsto w) &0 \\ 0 & (0 \mapsto m) + (1 \mapsto \infty) + (2 \mapsto \infty) & 0 \\ 0 & (0 \mapsto p) + (1 \mapsto 0) + (2 \mapsto 0) &m}
	\]
	where we observe that
	\begin{enumerate*}\item only one choice (0) -- one assignment -- gives a matrix without \(\infty\) coefficient, corresponding to the fact that, in the original system, only \(\pi_0\) could be used to complete the proof, \item the choice impact the matrix only \emph{locally}, the coefficients being \emph{mostly} the same, independently from the choice, \item the influence of \pr|X2| on itself is where possible non-polynomial growth rates lies, as the \(\infty\) coefficient are in the second column, second row.
	\end{enumerate*}
	This example was not implemented, as \pr|loop| is not a standard \texttt{C} operator, but is currently being implemented as a restricted form of \prc{for} loop (\cf \url{https://github.com/seiller/pymwp/issues/5}).
\end{example}

We are now in possession of all the material and intuitions to state the correspondence between our approach and the one of Jones and Kristiansen.

\begin{lemma}
	\label{lem:equi-expr}
	Given a program \(P\), there is a single matrix \(M\in \mathbb{M}(\{0,1,2\}^p\rightarrow\textsc{mwp}^\infty)\) such that \(P\vdash M\), \ie the system is deterministic. Moreover, for any assignment \(\alpha=(a_1,\dots,a_p)\in A^p\), we have that
	\[ P\vdashJK M[\alpha] \text{ if and only if } M[\alpha] \in \mathbb{M}(\textsc{mwp}). \]
\end{lemma}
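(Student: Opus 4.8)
I would prove the two assertions separately: first that the new system is deterministic (exactly one $M$ with $P\vdash M$), then the equivalence $P\vdashJK M[\alpha]\iff M[\alpha]\in\mathbb{M}(\textsc{mwp})$, whose ``only if'' direction is immediate and whose ``if'' direction carries all the weight. For determinism I would argue by structural induction on $P$, observing that the rules of \autoref{fig:new-rules} are syntax-directed: to \pr|Xi + Xj| and \pr|Xi - Xj| only E$^{\textsc{A}}$ applies, to \pr|Xi * Xj| only E$^{\textsc{M}}$, and to each command constructor exactly one of A, C, I, L$^\infty$, W$^\infty$ applies. The decisive point is that, whereas the original L and W carried side conditions whose failure blocked the derivation and forced the backtracking responsible for the non-determinism of proof search, the rules L$^\infty$ and W$^\infty$ have \emph{no} side condition and hence always apply. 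Since in every rule the conclusion's matrix is a function of the premises' matrices alone, the induction hypothesis (uniqueness for sub-programs) yields a unique $M$; nesting of compound expressions is resolved by the same recursion and reintroduces no choice.

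\textbf{The ``only if'' direction and the tools for ``if''.} Every rule of the original system (\autoref{fig:orig-rules}) produces a matrix with coefficients in $\textsc{mwp}$, so any $N$ with $P\vdashJK N$ satisfies $N\in\mathbb{M}(\textsc{mwp})$ by construction; in particular $P\vdashJK M[\alpha]$ forces $M[\alpha]\in\mathbb{M}(\textsc{mwp})$. For the converse I would fix $\alpha$, assume $M[\alpha]\in\mathbb{M}(\textsc{mwp})$ (that is, $M[\alpha]$ contains no $\infty$), and induct on $P$ along the unique derivation of Part~1, reconstructing a JK derivation of $M[\alpha]$. Two facts are used throughout. \emph{(i)} Evaluation at $\alpha$ is a semi-ring homomorphism $(\cdot)[\alpha]\colon\mathbb{M}(\prod_i A_i\to\textsc{mwp}^\infty)\to\mathbb{M}(\textsc{mwp}^\infty)$ --- a consequence of \autoref{lem:semi-ring-iso} --- so it commutes with $\oplus$, $\otimes$ and the closure $\cdot^*$, while the independent choices of different sub-derivations are handled by restricting $\alpha$ to the relevant coordinates. \emph{(ii)} The coefficient $\infty$ cannot be erased: since $\oplus=\max$ with $\infty$ the top element and $0\times^\infty\infty=\infty$, an $\infty$ in any summand forces $\infty$ in $A\oplus B$, an $\infty$ entry poisons an entire row or column of $A\otimes B$, and (as $M^*\geq M$ componentwise) an $\infty$ in $M$ forces one in $M^*$.

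\textbf{The ``if'' direction, by cases.} In the A case the three summands of E$^{\textsc{A}}$, read off at the relevant coordinate, reproduce exactly the three JK expression derivations (choice $0\leftrightarrow$ E4, $1\leftrightarrow$ E3, $2\leftrightarrow$ E2), and E$^{\textsc{M}}$ reproduces E2 for multiplication; composing with the unchanged A rule yields the JK assignment of $M[\alpha]$. In the C and I cases fact \emph{(ii)} shows that if $(A\otimes B)[\alpha]$ (\resp $(A\oplus B)[\alpha]$) has no $\infty$ then neither component does, so the induction hypothesis applies to both sub-programs and the unchanged JK C (\resp I) rule rebuilds $M[\alpha]$. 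In the L$^\infty$ case, fact \emph{(ii)} first forces the body matrix to be $\infty$-free, enabling the induction hypothesis; moreover the summand $\{_j^\infty\rightarrow j\mid M^*_{jj}\neq m\}$ must vanish, which says precisely that $M^*_{jj}=m$ for all $j$ --- the side condition of the original L rule --- whereupon L$^\infty$ and L deliver the identical matrix. The W$^\infty$ case is analogous, the vanishing of its two $\infty$-summands encoding exactly the conditions $\forall i\,M^*_{ii}=m$ and $\forall i,j\,M^*_{ij}\neq p$ of the original W rule. (Programs containing function calls lie outside the original system, so the call rule plays no role in this comparison.)

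\textbf{Main obstacle.} The crux is fact \emph{(ii)}: the entire induction in the ``if'' direction rests on being able to pull an $\infty$-free conclusion apart into $\infty$-free premises, and this is exactly what the deliberate law $0\times^\infty\infty=\infty$ --- the loss of the strong semi-ring property --- buys. Verifying it carefully at $\otimes$ and at $\cdot^*$, together with checking that the $\infty$-summands of L$^\infty$ and W$^\infty$ coincide on the nose with the side conditions of L and W, is the technical heart of the argument; the remaining bookkeeping (the homomorphism property and the coordinate-splitting of $\alpha$) is routine given \autoref{lem:semi-ring-iso}.
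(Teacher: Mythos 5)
Your proposal is correct and takes essentially the same approach the paper intends: the paper states \autoref{lem:equi-expr} without a written-out proof, relying exactly on the observations you assemble --- the rules of \autoref{fig:new-rules} are syntax-directed with no side conditions (hence determinism), E\(^{\textsc{A}}\) and E\(^{\textsc{M}}\) are equi-expressive with E1--E4, the \(\infty\)-summands of L\(^{\infty}\) and W\(^{\infty}\) vanish precisely when the side conditions of the original L and W hold (so the results coincide in that case), and the deliberate choice \(0 \times^{\infty} \infty = \infty\) ensures \(\infty\) coefficients propagate through \(\oplus\), \(\otimes\) and \(\cdot^{*}\) and are never erased. Your identification of the non-strongness of \(\textsc{mwp}^{\infty}\) as the technical heart, and your care in noting that evaluation at \(\alpha\) commutes with the pointwise operations even though \autoref{lem:semi-ring-iso} is stated for strong semi-rings, match the paper's own emphasis, so there is nothing to flag.
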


This shows that the performed analyses coincide, as \( M[\alpha] \in \mathbb{M}(\textsc{mwp})\) implies that no \(\infty\) coefficient occurs in it.
However, our alternative definition should be understood as an important improvement, as it allows for a more efficient implementation (\autoref{sec:implementation}).
But before discussing the efficiency of the implementation, we will now explain the natural but important extension to function calls enabled by our alternative formalism.

\section{Extending the analysis with function calls}
\label{composition}

We begin by extending the syntax presented in \autoref{subsec:language} by adding \emph{function declarations} \pr|F| \(\coloneqq\) \pr|f(X1, $\hdots$, XN){C; return R}| and a command that performs a function call and assign its return value to a variable \pr|Xi = F(X1, $\hdots$, XN)|\footnote{Function calls that discard the output could also be dealt with easily, but are vacuous in our effect-free language}.
A \emph{program} is now a series of function declarations, with one of them called \pr|main| with \(N = 0\), and such that all the commands of the form \pr|Xi = F(X1,|\(\hdots\)\pr|, XN)| refers to a function previously declared.
A \emph{chunk} is simply a series of commands inside a function declaration\footnote{Note that this implies that if a loop belongs to a chunk, then the entire loop body belongs to the chunk.}.

One of the key points of our contribution is the extension of the analysis to function calls, in a way that can be used in practice, as we handle a function \(f\) with a single analysis that stores a minimal amount of data for latter calls.
The principle is the following: given the matrix \(M(f)\) obtained from the analysis of the program computing \(f\), we store only the \(k\) choices for which no \(\infty\) coefficients appear, and then project them to only keep track of the different input/output behaviors, merging choices leading to the same result.
After this operation, we are left with a family \(M(f)_0, M(f)_1, \hdots,M(f)_k\) of matrices\footnote{To ease the presentation, the syntax considered here is restricted to functions with a single output value, so we actually have vectors in place of matrices here. But it is more natural to think in terms of matrices here, as the overall approach is valid in the more general setting in which functions may have several output values, and then the obtained objects are indeed matrices.} that should be understood as providing quantitative (\ie polynomial, weak polynomial, maximum, or zero) information about the dependency of output values w.r.t. input values. Now, the analysis of the command calling the function \(f\) is dealt with by the call rule of \autoref{fig:new-rules}.

Formally, we show that our definition of composition is coherent with the initial analysis as follows. We consider two \emph{chunks}: the first chuck \(P\) contains a call to a function \(f\), the second is obtained by replacing within \(P\) the call to the function \(f\) by inserting in its place the sequence of commands \(F\) computing \(f\). This second chunk is called \(P[F]\). We then prove that the matrix associated to \(P\) is \enquote{the same}\footnote{Here one has to consider equality up to some projections as the chuck \(F\) inserted in \(P\) may introduce new choices and use additional variables.}
Intuitively, this mechanism provides the expected result because the choices made in the chunk \(F\) do \emph{not} affect the context \(P[\cdot]\), and the variables used in the chunk \(F\) are \emph{not} used in the context \(P[\cdot]\) except for the return variables. %

More formally, let \(P\) be a chunk of program, containing a call to the function \(f\), and let \(F\) be the chunk computing the function \(f\). We define from \(P\) the \emph{context} \(P[\cdot]\), a chunk containing a hole \([\cdot]\) to be filled with the chunk \(F\), obtained as follows (supposing \(f\) has a single output variable).
\begin{itemize}[noitemsep,nolistsep]
	\item We remove the line with the function call, say \pr|Xi=f(X1, $\hdots$, XN);|.
	\item We add in place the following lines, where \pr|R|, \pr|Y1|, \(\hdots\), \pr|Yn| are fresh variables:
	      \begin{lstlisting}[language=C, belowskip=0.1 \baselineskip, aboveskip=0.1 \baselineskip]
	Y1 = X1;
	|$\dots$|
	YN = XN;
	|$[\cdot]$|
	Xi = R;
\end{lstlisting}
\end{itemize}

The code \(P[F]\) is then obtained by defining a chunk \(\tilde{F}\), and inserting it in place of the symbol \(\cdot\) in \(P[\cdot]\). The chunk \(\tilde{F}\) is obtained as follows from \(F\):
\begin{itemize}[nolistsep,noitemsep]
	\item the header is removed,
	\item the input variables of \(F\) are renamed to \pr|Y1|, \pr|Y2|, \(\hdots\), \pr|YN|,
	\item the variable returned by \(f\) is renamed to \pr|R|, the \prc|return| statement is removed,
	\item all other variables are renamed if needed to avoid using the same names as the variables in \(P[\cdot]\). We write the set of these variables \(V_{F}\).
\end{itemize}

\begin{example}
	Refer to \autoref{fig:example-inline} for a simple example of the code transformation for in-lining a function call.
	\begin{figure}
		{\centering

			\begin{tabular}{ l l | l l }
				\(P = \)        & \begin{lstlisting}[language=C]
int main(){
	X3 = X1 + X2;
	X2 = X3 + X1;
	X1 = f(X2);
}
\end{lstlisting}
				                &
				\(P[\cdot] = \) & \begin{lstlisting}[language=C]
int main(){
	X3 = X1 + X2;
	X2 = X3 + X1;
	Y1 = X2;
	|$[\cdot]$|
	X1 = R;
}
\end{lstlisting}
				\\ \hline
				\(Q = \)        & \begin{lstlisting}[language=C]
int f(int X1){
	loop X1{X2 = X2 + X3};
	return X2;
}
\end{lstlisting} &
				\(\tilde{Q} =\) & \begin{lstlisting}[language=C]
loop Y1{R = R + X4};
\end{lstlisting}
			\end{tabular}

		}
		\caption{A simple example of \enquote{inlining} a function call}
		\label{fig:example-inline}
	\end{figure}
\end{example}

Now, we can compute both matrices:
\begin{itemize}[nolistsep,noitemsep]
	\item \(M(P)\) where the line \pr|Xi=f(X1, $\hdots$, XN);| is analysed using the \(\mathrm{call}\) rule, and
	\item \(M(P[F])\).
\end{itemize}
We write \(\Pi_P(M(P[F]))\) the projection of \(M(P[F])\) onto the variables in \(P\) and \((1-\Pi_P)(M(P[F]))\) the projection of \(M(P[F])\) onto the variables \emph{not in} \(P\).

Some non-deterministic choices may appear within the (modified) chunk \(\tilde{F}\) inside \(P[F]\), \ie
\begin{itemize}[nolistsep,noitemsep]
	\item the coefficients of the matrix \(M(P)\) are elements of the semi-ring \(\prod_{i=1}^{p+1} A_i \rightarrow \mathbb{M}(\textsc{mwp})\), with one particular choice corresponding to the \(\mathrm{call}\) rule -- we write the corresponding index \(i_0\);
	\item the coefficients of \(P[F]\) are elements of the semi-ring \(\prod_{i=1}^{p+k} B_i\rightarrow \mathbb{M}(\textsc{mwp})\), where \(k\) choices are made within the chunk \(\tilde{F}\) -- we write the corresponding indexes \(j_1,j_2,\dots,j_k\) (note these are in fact consecutive indexes).
\end{itemize}
We note \(\pi: \{1,\dots,p+k\} \rightarrow \{1,\dots,p+1\}\) the projection of the choices in \(P[F]\) onto the corresponding choices in \(P\), \ie
\[
	\pi(j)=\left\{
	\begin{tabular}{ll}
		\(j\)     & \text{ if \(j<j_0\)}               \\
		\(i_0\)   & \text{ if \(j_0 \leqslant j<j_k\)} \\
		\(j-k+1\) & \text{ if \(j_k< j\)}
	\end{tabular}
	\right.
\]
We note that each matrix used as axiom in the function call corresponds to a specific assignment on indexes \(j_1,\dots,j_k\). We write \(\Psi: A_{i_0}\rightarrow \prod_{i=j_1}^{j_k} B_i\) the corresponding injection. This is extended to \(\bar{\Psi}: \prod_{i=1}^{p+1} A_{i}\rightarrow \prod_{i=0}^{p+k} B_i\) in a straightforward way.

We can now state the main theorem showing that the \(\mathrm{call}\) rule adequately analyses function calls.
\begin{theorem}
	For all assignment \(\alpha\) of \(\prod_{i=1}^{p+1} A_i\),
	\[ M(P)[\alpha]=(1-\Pi_P)(M(P[F]))[\bar{\Psi}(\alpha)] \]
	Moreover, for all assignment \(\beta\) of \(\prod_{i=0}^{p+k} B_i\) not in \(\mathrm{Im}(\bar{\Psi})\), the matrix \((1-\Pi_P)(M(P[F])[\beta])\) contains an infinite value.
\end{theorem}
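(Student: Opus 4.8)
The plan is to reduce the global statement to a local ``gadget correctness'' claim at the call site, exploiting compositionality (the C rule $M(C_1;C_2)=M(C_1)\otimes M(C_2)$) together with the freshness of the inlined variables. First I would factor both matrices through the point of substitution, assuming for now that the call occurs at the top level of the sequence $P$. Writing $P=P_1;c;P_2$ where $c$ is the call command, the call and C rules give $M(P)=M(P_1)\otimes M_{\mathrm{call}}\otimes M(P_2)$ with $M_{\mathrm{call}}=\umat\xleftarrow{\text{\pr|i|}}\bigl((0\mapsto M(f)_0)\oplus\dots\oplus(k\mapsto M(f)_k)\bigr)$. Correspondingly $P[F]=P_1;c_{\mathrm{in}};\tilde F;c_{\mathrm{out}};P_2$, where $c_{\mathrm{in}}$ is the block \pr|Y1=X1;$\dots$;YN=XN| and $c_{\mathrm{out}}$ is \pr|Xi=R|, so with $M_{\mathrm{in}}=M(c_{\mathrm{in}})$ and $M_{\mathrm{out}}=M(c_{\mathrm{out}})$ we have $M(P[F])=M(P_1)\otimes M_{\mathrm{in}}\otimes M(\tilde F)\otimes M_{\mathrm{out}}\otimes M(P_2)$. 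Since neither $P_1$ nor $P_2$ mentions the fresh variables \pr|Y1|,$\dots$,\pr|YN|,\pr|R| or the auxiliary variables $V_F$, both act as the identity on those coordinates; this is the structural fact that lets the projection $\Pi_P$ pass through the outer factors.

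The core of the argument is then the columnwise identity
\[ \Pi_P\bigl(M_{\mathrm{in}}\otimes M(\tilde F)\otimes M_{\mathrm{out}}\bigr)[\bar\Psi(\alpha)]=M_{\mathrm{call}}[\alpha]. \]
The only \pr|X|-variable written inside the gadget is \pr|Xi| (through \pr|Xi=R|), so on the block of \pr|X|-variables the product is the identity except in column \pr|i|; there the chain \pr|Xi| $\leftarrow$ \pr|R| $\leftarrow\tilde F\leftarrow$ \pr|Y1|$\dots$\pr|YN| $\leftarrow$ \pr|X1|$\dots$\pr|XN| composes the copy-in and copy-out steps (each a maximal $m$-flow along the renaming) with the input/output behaviour of $\tilde F$. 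Because $\tilde F$ is $F$ up to a bijective renaming of variables, $M(\tilde F)$ is $M(f)$ transported along that renaming, so the \pr|R|-on-\pr|Yj| dependencies of $\tilde F$ under a given choice coincide with the output-on-input dependencies recorded in $M(f)_l$. I would then check that $\bar\Psi$ sends the call choice in $\alpha$ to the $\tilde F$-choices $j_1,\dots,j_k$ that select exactly that $M(f)_l$ (the defining property of $\Psi$), while the remaining $p$ choices of $P$ are carried over identically by $\pi$. Together with the identity behaviour of $M(P_1),M(P_2)$ on the fresh coordinates, this gives $M(P)[\alpha]=\Pi_P(M(P[F]))[\bar\Psi(\alpha)]$. (I read the $(1-\Pi_P)$ in the statement's first display as a typo for $\Pi_P$, its left-hand side living on the \pr|X|-block.)

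For the second part I would use that $M(f)_0,\dots,M(f)_k$ record precisely the $\infty$-free choices of $f$. If $\beta$ is not in $\mathrm{Im}(\bar\Psi)$, then its restriction to the $\tilde F$-choices $j_1,\dots,j_k$ is not one of these $\infty$-free assignments, so $M(\tilde F)[\beta]$ already carries an $\infty$ on a coordinate touching a fresh variable (an auxiliary variable in $V_F$ or \pr|R|). The essential point is that this $\infty$ cannot be erased by the surrounding composition: we work in $\textsc{mwp}^\infty$, where $0\times^\infty\infty=\infty$, so matrix products never delete an $\infty$ entry, and since the fresh variables are never overwritten by $P_1$, $P_2$, or a copy-out of a different variable, the $\infty$ persists in the non-\pr|X| block. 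Hence $(1-\Pi_P)(M(P[F])[\beta])$ contains an infinite value.

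The main obstacle I anticipate is the bookkeeping around the choice maps and the merging step: since $M(f)_0,\dots,M(f)_k$ are obtained by merging several $\infty$-free choices with identical input/output behaviour, $\Psi$ must be arranged to enumerate exactly the $\infty$-free assignments of $\tilde F$ (with consistent representatives), and one must verify that the merged and unmerged descriptions agree after applying $\Pi_P$; lining this reindexing up with $\pi$ and $\bar\Psi$, and confirming that ``not in $\mathrm{Im}(\bar\Psi)$'' genuinely means ``an $\infty$-producing choice'', is where the care is needed. A secondary obstacle is the case where the call sits inside a \pr|loop| or \pr|if| within $P$: then $M_{\mathrm{call}}$ and the gadget appear under a closure $\cdot^{*}$ or a sum, and I would argue by induction on the structure of the context $P[\cdot]$ that $\Pi_P$ still slides through, again because freshness of the $V_F$ variables makes the loop guards and the $\cdot^{*}$ operation act blockwise.
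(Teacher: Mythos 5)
Your proposal is correct and takes essentially the same route as the paper, whose entire proof consists of the same move: reduce by compositionality (the C rule) to the case where the chunk \(P\) is the single call command, then verify the claim by a direct, index-tracking computation -- your gadget identity \(\Pi_P\bigl(M_{\mathrm{in}}\otimes M(\tilde F)\otimes M_{\mathrm{out}}\bigr)[\bar{\Psi}(\alpha)]=M_{\mathrm{call}}[\alpha]\), the persistence of \(\infty\) via \(0\times^{\infty}\infty=\infty\), and the reading of \((1-\Pi_P)\) in the first display as a typo for \(\Pi_P\) are all details the paper leaves implicit in its two-sentence sketch. The obstacles you flag (choosing representatives for \(\bar{\Psi}\) after merging, and calls nested under loops or conditionals) are genuine bookkeeping concerns, but they are exactly what the paper dismisses as ``straightforward, though tedious,'' so they do not separate your approach from its proof.
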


\begin{proof}
	To prove this, we first notice that it is sufficient to prove it for the simplest chuck \(P\) containing only one command: \pr|Xi = f(X1, $\hdots$, XN)|. This is explained by the compositional nature of the analysis (a sequence of commands is simply assigned the product of the matrices of each individual command). Then, checking that the theorem holds in this case is a straightforward, though tedious (due to keeping track of all indices), computation.
\end{proof}

\section{Implementation of the analysis}
\label{sec:implementation}

The formulation of the extended mwp analysis exposed in the previous sections was also intended for implementation. As such, the choice of the representation of non-determinism -- for instance -- was also guided by our wish for a faster analysis, something not discussed in depth in our tool paper~\cite{Aubert2021f}, or \href{https://seiller.github.io/pymwp/}{our documentation}.
In this section, we expose some of the specific choices made in the implementation.

\subsection{Non-determinism, and the challenges to efficient calculations}

As explained in the above sections, the result of the analysis is a matrix with coefficients in a semi-ring of the form \(\prod_{i=1}^{p} A_i \rightarrow \mathbb{M}(\textsc{mwp})\)-- setting aside \(\infty\) coefficients for a moment.
To implement this correctly, we represent elements of this semi-ring as polynomials w.r.t. the generating set given by the functions \(\delta(i,j): \prod_{i=1}^{p} A_i \rightarrow {\textsc{mwp}}\) defined by \(\delta(i,j)(a_1,a_2,\dots,a_p)=m\) if \(a_j=i\) and \(\delta(i,j)(a_1,a_2,\dots,a_p)=0\) otherwise. \ie an element of \(\prod_{i=1}^{p} A_i \rightarrow {\textsc{mwp}}\) is represented as a polynomial \(\sum_{i=1}^{n} \alpha_i \prod_{j=1}^{k_i}\delta(a_{i,j},b_{i,j})\) with \(\alpha_i \in {\textsc{mwp}}\).

This basis have an important property: the monomials $\alpha_i \prod_{j=1}^{k_i}\delta(a_{i,j},b_{i,j})$ in a polynomial can be ordered in such a way that the product with another monomial is ordered. \ie if \(m\leqslant m'\) and both \(m\times n\) and \(m' \times n\) are non-zero, then \(m\times n\leqslant m'\times n\). This order is leveraged to obtain efficient algorithms for computing operations on the representation of coefficients, similar to what is done using Gröbner bases for computation of standard polynomials.
For instance, the algorithm for \href{https://github.com/seiller/pymwp/blob/746da71a5490c5f21ebc5643ea20822f78876959/pymwp/polynomial.py#L199}{multiplication of polynomials} makes use of the property above and proceeds as follows to compute the product of a polynomial \(P\) with \(\sum_{i=1}^{n} \alpha_i \prod_{j=1}^{k_i}\delta(a_{i,j},b_{i,j})\) (supposing the representation of \(P\) is ordered):
\begin{enumerate}[nolistsep,noitemsep]
	\item compute the products \(P_i= P\times \alpha_i \prod_{j=1}^{k_i}\delta(a_{i,j},b_{i,j})\) for all \(i\);
	\item compare and order a list \(L\) of all the first elements of those polynomials;
	      \item\label{algostep} append the smallest element to the result and remove it from the corresponding \(P_i\);
	\item insert the (new) first element of \(P_i\) to the list \(L\) if it exists;
	\item if \(L\) is non-empty, go back to step \ref{algostep}.
\end{enumerate}
This clever method has some very concrete consequences.
As an example, our \href{https://seiller.github.io/pymwp/demo/#other_explosion.c}{\texttt{explosion.c}} program calls the multiplication 11,907 times and could not be completed with a naive multiplication implementation.
More precise \href{https://seiller.github.io/pymwp/utilities/}{profiling} further exposes the need for this optimization.

\subsection{Infinite values cluttering the analysis, and difficulties to evaluate}

One very costly aspect of the analysis is the \emph{evaluation} step which takes a matrix with coefficients in \(\prod_{i=1}^{p} A_i \rightarrow {\textsc{mwp}}\) and checks all possible assignments \((a_1,\dots,a_p)\in \prod_{i=1}^{p} A_i\) to look for infinite coefficients. While this step is necessary (in one form or another) if one wishes to produce the actual mwp matrices certifying polynomial bounds (something needed at least once to allow for function calls), we implemented a specific data structure allowing to keep track of infinite assignments on the fly, thus allowing the analysis to provide a qualitative answer quickly. \Ie the analysis can ensure the existence or not of mwp-bounds \emph{without computing the corresponding matrix}.

This is implemented by a structure we called \href{https://github.com/seiller/pymwp/blob/946a5b44692325095392694950ed03807f059b52/pymwp/delta_graphs.py}{\texttt{delta\_graphs}}.
This is a graph whose vertices are monomials; the graph is populated during the analysis by adding those monomials that appear with an infinite coefficient -- \ie possible choices leading to \(\infty\) in the resulting matrix.
This graph is structured in layers: each layer corresponds to the size of the monomials it contains (the number of deltas \(\delta(i,j)\) is contains). The intuition is that a monomial -- or rather a list of deltas $\delta(\_,\_)$-- defines a subset of the space \(\prod_{i=1}^{p} A_i\); the less deltas in the monomial, the greater the subspace represented. (Note here that our intuitions come from the standard topological structure of spaces of infinite sequences, where such a monomial represents a \enquote{cylinder set}, \ie an element of the standard basis for open sets.) As we populate the delta\_graph, we create edges within a given layer to keep track of differences between monomials: we add an edge labeled \(i\) between two monomials if and only if they differ only on one delta \(\delta(\_,i)\) (\ie one is obtained from the other by replacing the first index of \(\delta(\_,i)\)). This is used to implement a \href{https://github.com/seiller/pymwp/blob/946a5b44692325095392694950ed03807f059b52/pymwp/delta_graphs.py#L274}{\enquote{fusion}} method on delta\_graphs which simplifies the structure: as soon as as a monomial \(m\) in layer \(n\) has \(\mathrm{Card}(A_i)-1\) outgoing edges labelled \(i\), we can remove all these monomials and insert a shorter monomial in layer \(n-1\) (obtained from \(m\) by simply removing \(\delta(\_,i)\)). This implements the fact that \(\sum_{k=0}^{\mathrm{Card}(A_i)-1} m \delta(k,j) = m\).

Remember the delta\_graph represents the subspace of assignments for which an infinite coefficient appeared. So if at some point the delta\_graph is completely simplified (\ie \enquote{fusions} to the graph with a unique monomial consisting in an empty list of \(\delta(\_,\_)\)), it means the whole space of assignments is represented and no mwp-bounds can be found. On the contrary, if the analysis ends with a delta\_graph different from the completely simplified one, it means at least one assignment exists for which no infinite coefficients appear, and therefore at least one mwp-bound exists.

\section{Future work}

We here provide some details on extensions of this work that we are currently working on, or that will be tackled in the near future.

The first natural line of work is the extension of the language analysed, in particular to accommodate other data structures.
While structures such as lists should not be problematic, dealing with pointer will certainly require more involved work, in particular to ensure the theoretical results obtained by Jones and Kristiansen hold, \ie that the obtained mwp-bounds are indeed correct. These extensions, together with the extension to function calls discussed in this paper, will then be added within our implementation of the analysis.

A second line of work that was already started is to implement the analysis in the Compcert compiler~\cite{Leroy2009}, which would allow for a formal certification of the polynomial bounds computed by the analysis using the Coq proof assistant~\cite{coqman}.
Some preliminary work in this direction was already done. In particular, it seems natural to use \href{https://compcertssa.gitlabpages.inria.fr}{\sc compcert-ssa}~\cite{Barthe2014} to be later used as stepping stone towards an implementation within {\texttt llvm} -- and if possible \href{https://www.cis.upenn.edu/~stevez/vellvm/}{\texttt certified-llvm}~\cite{Zhao2013} -- which would enable the analysis to programs written in other languages than {\texttt C}.

\bibliographystyle{splncs04}
\bibliography{standalone}

\appendix
\section{Technical Appendix on Semi-rings}
\label{app:sec:semi-ring}

\subsection{The mwp semi-ring}
\label{sec:app:mwp}

This subsection briefly recall semi-ring definition and proves that the mwp semi-ring is indeed a semi-ring.

\begin{definition}[Semi-ring]
	\label{def:semi-ring}
	A semi-ring \(\mathbb{S} =(S,0,1,+,\times )\) is specified by a set \(S\) and two binary operations \(+\) (addition) and \(\times \) (multiplication) such that \(\{0,1\} \in S\) and
	\begin{enumerate}
		\item \((S, 0,+)\) is a commutative monoid: the operation \(+\) is associative, commutative, and has \(0\) as the identity element,
		\item \((S, 1,\times )\) is a monoid: the operation \(\times \) is associative and has \(1\) as the identity element,
		\item the operation \(\times \) distributes with respect to \(+\): for all \(a, b, c \in S\), \(a \times (b + c) = a \times b + a \times c\) and \((b + c) \times a = b \times a + c \times a\)
	\end{enumerate}

	We call \(\mathbb{S} \) a \emph{strong} semi-ring if, additionally, \emph{\(0\) annihilates \(S\)}, i.e.\

	\begin{enumerate}[resume]
		\item \(0\times a=a\times 0=0\) for all \(a \in S\).
	\end{enumerate}
\end{definition}

\begin{lemma}[mwp-semi-ring]
	\label{lem:mwp-is-a-semiring}
	The tuple \((\{0,m,w,p\}, 0, m, +, \times )\), with
	\begin{itemize}
		\item \( 0 < m < w < p \),
		\item \(\alpha + \beta = \begin{dcases*}
			      \alpha & if \(\alpha \geqslant \beta \) \\
			      \beta  & otherwise
		      \end{dcases*}\)
		\item \(\alpha \times \beta =
		      \begin{dcases*}
			      \alpha + \beta & if \(\alpha \neq 0\) and \(\beta \neq 0\) \\
			      0              & otherwise
		      \end{dcases*}\)
	\end{itemize}
	is a \emph{strong} semi-ring.
\end{lemma}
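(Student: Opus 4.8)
The plan is to verify the four axioms of \autoref{def:semi-ring} directly, exploiting the fact that both operations are built from the total order $0 < m < w < p$ via $\max$. First I would record the single observation that drives everything: on this chain $\max$ is associative, commutative, and idempotent, with the least element acting as a neutral element. Since $\alpha + \beta = \max(\alpha,\beta)$, this immediately makes $(\{0,m,w,p\}, 0, +)$ a commutative monoid with $0$ as identity (axiom 1), because $0$ is the least element of the chain.

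For the multiplicative structure I would isolate the nonzero elements $\{m,w,p\}$. On this sub-chain $\alpha\times\beta = \max(\alpha,\beta)$, so $(\{m,w,p\}, m, \times)$ is again a commutative monoid, with the least nonzero element $m$ as identity. Multiplication on the full set is then obtained by adjoining $0$ as an \emph{absorbing} element, since $\alpha\times\beta = 0$ as soon as one factor is $0$. Adjoining an absorbing element to a commutative monoid preserves associativity and commutativity and leaves the identity unchanged, which yields axiom 2 with $m$ in the role of the semi-ring unit. The same observation gives axiom 4 for free: $0\times\alpha = \alpha\times 0 = 0$ holds by the very definition of $\times$, so $0$ annihilates and the semi-ring is \emph{strong}. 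Concretely, associativity of $\times$ is checked by noting that both $(\alpha\times\beta)\times\gamma$ and $\alpha\times(\beta\times\gamma)$ equal $0$ when any factor is $0$ and equal $\max(\alpha,\beta,\gamma)$ otherwise.

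The one axiom that requires genuine work -- and the step I expect to be the main obstacle -- is distributivity (axiom 3). Here I would argue by cases on whether the factors vanish, since $\times$ behaves differently on $0$. If $a = 0$ both sides of $a\times(b+c) = a\times b + a\times c$ collapse to $0$. If $a\neq 0$ and both $b,c\neq 0$, then $b+c = \max(b,c)\neq 0$, so the left side is $\max(a,b,c)$ while the right side is $\max(\max(a,b),\max(a,c)) = \max(a,b,c)$. The delicate case is $a\neq 0$ with exactly one of $b,c$ equal to $0$, say $c = 0$: then $b + c = b$, the left side is $a\times b = \max(a,b)$, and the right side is $\max(a,b) + 0 = \max(a,b)$, so they still agree; the subcase $b=c=0$ gives $0$ on both sides. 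Right distributivity then follows immediately from the commutativity of both operations, so there is no need to repeat the case analysis. Assembling axioms 1--4 completes the proof.
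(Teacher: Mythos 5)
Your proof is correct, but it is organized quite differently from the paper's. The paper proceeds by exhaustive low-level case analysis: six cases on the relative order of \(\alpha,\beta,\gamma\) for associativity of \(+\), explicit enumeration of all products against \(0\) and \(m\) for the identity and annihilation axioms, and two separately proved distributivity laws, each split on the order of the two summands --- commutativity of \(\times\) is never stated or used there. You instead factor the verification through structure: \(+\) is \(\max\) on a chain with least element \(0\), which settles axiom 1 of \autoref{def:semi-ring} as a standard fact about idempotent commutative monoids; \(\times\) is \(\max\) on the sub-chain \(\{m,w,p\}\) with an absorbing element \(0\) adjoined, which yields axioms 2 and 4 simultaneously (the closure fact that a \(\max\) of nonzero elements is nonzero is what makes the adjunction work, giving \((\alpha\times\beta)\times\gamma=\alpha\times(\beta\times\gamma)=\max(\alpha,\beta,\gamma)\) when no factor vanishes and \(0\) otherwise); and you prove only one distributivity law, deriving the other from commutativity of \(\times\) --- a shortcut the paper forgoes. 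Notably, your argument is more careful precisely where care is needed: the paper's distributivity cases split only on whether \(\beta\geqslant\gamma\) (resp.\ \(\alpha\geqslant\beta\)) and tacitly rely on monotonicity of \(\times\) to absorb the smaller product, gliding over the interaction with a zero factor, whereas you isolate the subcase where exactly one of the summands is \(0\) --- the only configuration in which \(\times\) deviates from \(\max\) --- and check it explicitly (the symmetric subcase being covered by commutativity of \(+\)). What the paper's grind buys is complete self-containedness; what your route buys is brevity, a clear identification of the single delicate point, and an argument that transfers verbatim to any bounded chain equipped with \(\max\).
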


\begin{proof}
	We prove that \((\{0,m,w,p\}, 0, m, +, \times )\) as defined respects the conditions of \autoref{def:semi-ring}.
	The proof is straightforward but detailed nevertheless.
	\begin{description}
		\item[\((\{0,m,w,p\}, 0, +)\) is a commutative monoid]
			We prove that \((\{0,m,w,p\}, +)\) is a commutative monoid by showing that it is associative, commutative, and has \(0\) as identity.

			\begin{description}
				\item[Associativity]
					\((\alpha + \beta ) + \gamma = \alpha + (\beta + \gamma)\)
					\begin{description}
						\item[Case 1: \(\alpha \geqslant \beta \geqslant \gamma\)]
							\begin{align*}
								         &  & \alpha                     & = \alpha                    \\
								\implies &  & \alpha + \gamma            & = \alpha + \beta            \\
								\implies &  & (\alpha + \beta ) + \gamma & = \alpha + (\beta + \gamma)
							\end{align*}

						\item[Case 2: \(\alpha \geqslant \gamma \geqslant \beta\)]
							\begin{align*}
								         &  & \alpha                     & = \alpha                    \\
								\implies &  & \alpha + \gamma            & = \alpha + \gamma           \\
								\implies &  & (\alpha + \beta ) + \gamma & = \alpha + (\beta + \gamma)
							\end{align*}

						\item[Case 3: \(\beta \geqslant \alpha \geqslant \gamma\)]
							\begin{align*}
								         &  & \beta                      & = \beta                     \\
								\implies &  & \beta + \gamma             & = \alpha + \beta            \\
								\implies &  & (\alpha + \beta ) + \gamma & = \alpha + (\beta + \gamma)
							\end{align*}

						\item[Case 4: \(\beta \geqslant \gamma \geqslant \alpha\)]
							\begin{align*}
								         &  & \beta                      & = \beta                     \\
								\implies &  & \beta + \gamma             & = \alpha + \beta            \\
								\implies &  & (\alpha + \beta ) + \gamma & = \alpha + (\beta + \gamma)
							\end{align*}

						\item[Case 5: \(\gamma \geqslant \alpha \geqslant \beta\)]
							\begin{align*}
								         &  & \gamma                     & = \gamma                    \\
								\implies &  & \alpha + \gamma            & = \alpha + \gamma           \\
								\implies &  & (\alpha + \beta ) + \gamma & = \alpha + (\beta + \gamma)
							\end{align*}

						\item[Case 6: \(\gamma \geqslant \beta \geqslant \alpha\)]
							\begin{align*}
								         &  & \gamma                     & = \gamma                    \\
								\implies &  & \beta + \gamma             & = \alpha + \gamma           \\
								\implies &  & (\alpha + \beta ) + \gamma & = \alpha + (\beta + \gamma)
							\end{align*}
					\end{description}

				\item[Commutative Property]
					Both cases are immediate:
					\begin{description}
						\item[Case 1: \(\alpha \geqslant \beta\)] \(\implies \alpha + \beta = \alpha = \beta + \alpha\)
						\item[Case 2: \(\beta \geqslant \alpha\)] \(\implies \alpha + \beta = \beta = \beta + \alpha\)
					\end{description}

				\item[Identity element is \(0\)]
					\[0 + 0 = 0 \qquad 0 + m = m \qquad 0 + w = w \qquad 0 + p = p\]
			\end{description}

		\item[\((\{0,m,w,p\}, m, \times )\) is a monoid]
			We now prove that \((\{0,m,w,p\}, m, \times )\) is a monoid by showing that it is associative, has \(m\) as identity, and has \(0\) as the annihilator.

			\begin{description}
				\item [Associativity]

				      \((\alpha \times \beta ) \times \gamma = \alpha \times (\beta \times \gamma )\)

				      \begin{description}
					      \item[Case 1: ] \(\alpha, \beta, \gamma \in \{m,w,p\} \)

						      \(\alpha \times \beta = \alpha + \beta \) Associativity of operation + is shown in the proof of the commutative monoid, \((\{0,m,w,p \}, +)\).

					      \item[Case 2: ] \(\alpha\), \(\beta\), or \(\gamma\) equals \(0\)

						      By definition of multiplication, the product is \(0\).
				      \end{description}

				\item[Identity element is \(m\)]
					\begin{alignat*}{4}
						 & 0 \times m &  & = 0 &  & = m \times 0 \\
						 & m \times m &  & = m &  & = m \times m \\
						 & w \times m &  & = w &  & = m \times w \\
						 & p \times m &  & = p &  & = m \times p
					\end{alignat*}

				\item[0 annihilates \(\{0,m,w,p\}\)]
					\begin{alignat*}{4}
						 & 0 \times 0 &  & = 0 &  & = 0 \times 0 \\
						 & m \times 0 &  & = 0 &  & = 0 \times m \\
						 & w \times 0 &  & = 0 &  & = 0 \times w \\
						 & p \times 0 &  & = 0 &  & = 0 \times p
					\end{alignat*}
			\end{description}

		\item [Distribution of multiplication over addition]
		      We conclude by proving that \(\times \) distributes over \(+\).

		\item[Right Distribution]
			\(\alpha \times (\beta + \gamma) = (\alpha \times \beta) + (\alpha \times \gamma)\)
			\begin{description}
				\item[Case 1: \(\beta \geqslant \gamma\)]
					\begin{align*}
						\implies &  & \alpha \times \beta            & = \alpha \times \beta                              \\
						\implies &  & \alpha \times (\beta + \gamma) & = (\alpha \times \beta ) + (\alpha \times \gamma )
					\end{align*}

				\item[Case 2: \(\gamma \geqslant \beta\)]
					\begin{align*}
						\implies &  & \alpha \times \gamma           & = \alpha \times \gamma                             \\
						\implies &  & \alpha \times (\beta + \gamma) & = (\alpha \times \beta ) + (\alpha \times \gamma )
					\end{align*}
			\end{description}

		\item[Left Distribution]
			\( (\alpha + \beta ) \times \gamma = (\alpha \times \gamma) + (\beta \times \gamma)\)
			\begin{description}
				\item[Case 1: \(\alpha \geqslant \beta\)]
					\begin{align*}
						\implies &  & \alpha \times \gamma            & = \alpha \times \gamma                             \\
						\implies &  & (\alpha + \beta ) \times \gamma & = (\alpha \times \gamma ) + (\beta \times \gamma )
					\end{align*}

				\item[Case 3: \(\beta \geqslant \alpha\)]
					\begin{align*}
						\implies &  & \beta \times \gamma             & = \beta \times \gamma                                       \\
						\implies &  & (\alpha + \beta ) \times \gamma & = (\alpha \times \gamma ) + (\beta \times \gamma ) \qedhere
					\end{align*}
			\end{description}

	\end{description}
\end{proof}

\subsection{Matrix Semi-ring}
\label{sec:app:matrix}

This subsection explains and details how matrices with coefficients in a semi-ring can be used to construct semi-rings.

\begin{lemma}
	\label{lem:matrices}
	Given a strong semi-ring \(\mathbb{S} = (S, 0, 1, +, \times )\), we define the tuple \(\mathbb{M} = (M,\mathbf{0} ,\mathsfbf{1} ,\oplus ,\otimes )\), with
	\begin{itemize}
		\item \(M\) the set of all \(n \times n\) matrices over \(S\), for all \(n \in \mathbb{N}\),
		\item \(\mathbf{0} \) defined by \(M = \mathbf{0} \) iff \(M_{ij} = 0\) for all \(i\) and \(j\),
		\item \(\mathsfbf{1} \) defined by \(M = \mathsfbf{1} \) iff \(M_{ij} = 1\) for \(i = j\), \(M_{ij} = 0\) otherwise,
		\item \(\oplus \) defined by \(C = A \oplus B\) iff \(C_{ij} = A_{ij} + B_{ij}\),
		\item \(\otimes \) defined by \(C = A \otimes B\) iff \(C_{ij} = \sum_{k=1}^{n} A_{ik} \times B_{kj}\),
	\end{itemize}
	is a strong semi-ring.
\end{lemma}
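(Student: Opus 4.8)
The plan is to verify the four conditions of \autoref{def:semi-ring} for $\mathbb{M}$ by lifting the corresponding properties from the underlying strong semi-ring $\mathbb{S}$: componentwise for the additive structure, and through index manipulations governed by distributivity and associativity for the multiplicative structure. Throughout I would fix the dimension $n$, since $\oplus$ and $\otimes$ are only defined between matrices of equal size, and check each law entry by entry.

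First I would dispatch the easy parts. Because $C = A \oplus B$ is defined by $C_{ij} = A_{ij} + B_{ij}$, the associativity and commutativity of $\oplus$ and the neutrality of $\mathbf{0}$ reduce immediately, in each coordinate $(i,j)$ separately, to the fact that $(S, 0, +)$ is a commutative monoid. The multiplicative identity and the annihilation condition both hinge on the \emph{strong} semi-ring hypothesis that $0$ annihilates $S$. For the identity I would compute $(A \otimes \mathsfbf{1})_{ij} = \sum_{k} A_{ik} \times \mathsfbf{1}_{kj}$ and note that every summand with $k \neq j$ equals $A_{ik} \times 0 = 0$, so the sum collapses to $A_{ij} \times 1 = A_{ij}$; the symmetric computation handles $\mathsfbf{1} \otimes A$. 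Likewise $(A \otimes \mathbf{0})_{ij} = \sum_k A_{ik} \times 0 = 0$, establishing item~4 of \autoref{def:semi-ring}. This is precisely where strength is needed: without $a \times 0 = 0$, neither the identity nor the annihilation argument goes through.

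The main obstacle is associativity of $\otimes$. I would expand $((A \otimes B) \otimes C)_{ij}$ as $\sum_{l}\bigl(\sum_{k} A_{ik} \times B_{kl}\bigr) \times C_{lj}$, push the factor $C_{lj}$ inside the inner sum using right distributivity of $\times$ over $+$, reassociate each resulting product by associativity of $\times$, swap the two summations using associativity and commutativity of $+$, and finally pull $A_{ik}$ out of the inner sum by left distributivity to reach $\sum_{k} A_{ik} \times \bigl(\sum_{l} B_{kl} \times C_{lj}\bigr) = (A \otimes (B \otimes C))_{ij}$. Distributivity of $\otimes$ over $\oplus$ uses the same toolkit but is shorter: $(A \otimes (B \oplus C))_{ij} = \sum_k A_{ik} \times (B_{kj} + C_{kj})$ splits termwise by distributivity in $\mathbb{S}$, and regrouping yields $(A \otimes B)_{ij} + (A \otimes C)_{ij}$, with the right distributive law treated symmetrically. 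The only genuine care required is the bookkeeping of the double index swap in the associativity step.
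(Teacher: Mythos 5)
Your proposal is correct and follows essentially the same route as the paper's proof: componentwise reduction to the monoid laws of \((S,0,+)\) for the additive structure, sum-collapsing via the annihilation property of \(0\) (where strength is indeed the crucial hypothesis) for the identity and \(\mathbf{0}\)-annihilation, and the expand--distribute--reassociate--swap computation for associativity of \(\otimes\) and distributivity. If anything, you are slightly more explicit than the paper about where distributivity of \(\times\) over \(+\) enters the associativity argument, which the paper's displayed computation leaves implicit.
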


\begin{proof}
	We prove that \(\mathbb{M} = (M,\mathbf{0} ,\mathsfbf{1} ,\oplus ,\otimes )\) as defined respects the conditions of \autoref{def:semi-ring}.
	Let \(A\),\(B\),\(C\) be \(n \times n\) matrices over \(S\) where \(n \in \mathbb{N}\).
	\begin{description}
		\item[\( (M,\mathbf{0} ,\mathsfbf{1} ,\oplus )\) is a commutative monoid]
			We first prove that \((M, \oplus )\) is a commutative monoid by showing that it is associative, commutative, and has \(\mathbf{0} \) as identity.

			\begin{description}
				\item[Associativity]
					\((A \oplus B) \oplus C = A \oplus (B \oplus C)\) iff \( ((A \oplus B) \oplus C)_{ij} = (A \oplus (B \oplus C))_{ij}\) for all \(i\), \(j\).
					\begin{align*}
						((A \oplus B) \oplus C)_{ij} & = (A \oplus B)_{ij} + C_{ij}                                 \\
						                             & = (A_{ij} + B_{ij}) + C_{ij}                                 \\
						                             & = A_{ij} + (B_{ij} + C_{ij}) \tag{by associativity of \(+\)} \\
						                             & = A_{ij} + (B \oplus C)_{ij}                                 \\
						                             & = (A \oplus ( B \oplus C))_{ij}
					\end{align*}

				\item[Commutative Property]
					\(A \oplus B = B \oplus A\) iff \((A \oplus B)_{ij} = (B \oplus A)_{ij}\) for all \(i\), \(j\).
					\begin{align*}
						(A \oplus B)_{ij} & = A_{ij} + B_{ij}                                 \\
						                  & = B_{ij} + A_{ij} \tag{by commutativity of \(+\)} \\
						                  & = (B \oplus A)_{ij}
					\end{align*}

				\item[Identity element is \(\mathbf{0} \)]
					Let \(A = \mathbf{0} \), then \(A_{ij} = 0\) for all \(i\), \(j\), and \(\mathbf{0} \) is the identity element iff \(A_{ij} + B_{ij} = B_{ij}\) for all \(i\), \(j\)
					\begin{align*}
						(A \oplus B)_{ij} & = A_{ij} + B_{ij}                       \\
						                  & = 0 + B_{ij} \tag{by identity of \(+\)} \\
						                  & = B_{ij}
					\end{align*}
			\end{description}

		\item[\((M, \mathsfbf{1} , \otimes )\) is a monoid]
			We now prove that \((M, \otimes )\) is a monoid by showing that it is associative and has \(\mathsfbf{1} \) as identity.

			\begin{description}
				\item [Associativity]
				      \((A \otimes B) \otimes C = A \otimes (B \otimes C)\) iff \(((A \otimes B) \otimes C) _{ij} = (A \otimes (B \otimes C))_{ij}\) for all \(i\), \(j\).
				      \begin{align*}
					      ((A \otimes B) \otimes C) _{ij} & = (\sum_{k=1}^{n} A_{ik} \times B_{kj}) \otimes C                                                  \\
					                                      & = \sum_{l=1}^{n} (\sum_{k=1}^{n} A_{ik} \times B_{kj})_{il} \times C_{lj}                          \\
					                                      & = \sum_{l=1}^{n} \sum_{k=1}^{n} (A_{ik} \times B_{kl}) \times C_{lj}                               \\
					                                      & = \sum_{k=1}^{n} \sum_{l=1}^{n} A_{ik} \times (B_{kl} \times C_{lj})\tag{by assoc. of \(\times \)} \\
					                                      & = \sum_{k=1}^{n} A_{ik} \times (\sum_{l=1}^{n} B_{il} \times C_{lj})_{kj}                          \\
					                                      & = A \otimes (\sum_{l=1}^{n} B_{il} \times C_{lj})                                                  \\
					                                      & = (A \otimes ( B \otimes C))_{ij}
				      \end{align*}

				\item[Identity element is \(\mathsfbf{1} \)] \(A \otimes B = B \) and \(B \otimes A = B \) where \(A = \mathsfbf{1} \) iff \(A_{ij} = 1\) for \(i = j\) and \(A_{ij} = 0\) otherwise.
					\begin{align*}
						(A \otimes B)_{ij} & = \sum_{k=1}^{n} A_{ik} \times B_{kj}                                                                      \\
						                   & = (A_{ii} \times B_{ij}) + \sum_{\mathclap{k=1,k \neq i}}^{n} A_{ik} \times B_{kj}                         \\
						                   & = (1 \times B_{ij}) + \sum_{\mathclap{k=1,k \neq i}}^{n} 0 \times B_{kj}\tag{by def. of \(\mathsfbf{1} \)} \\
						                   & = (1 \times B_{ij}) + \sum_{\mathclap{k=1,k \neq i}}^{n} 0\tag{by annihilation prop. of \(0\)}             \\
						                   & = (1 \times B_{ij}) \tag{by identity of \(+\)}                                                             \\
						                   & = B_{ij} \tag{by identity of \(\times \)}
					\end{align*}

					\begin{align*}
						(B \otimes A)_{ij} & = \sum_{k=1}^{n} B_{ik} \times A_{kj}                                                                      \\
						                   & = (B_{ij} \times A_{jj}) + \sum_{\mathclap{k=1,k \neq j}}^{n} B_{ik} \times A_{kj}                         \\
						                   & = (B_{ij} \times 1) + \sum_{\mathclap{k=1,k \neq j}}^{n} B_{ik} \times 0\tag{by def. of \(\mathsfbf{1} \)} \\
						                   & = (B_{ij} \times 1) + \sum_{\mathclap{k=1,k \neq j}}^{n} 0\tag{by annihilation prop. of \(0\)}             \\
						                   & = (B_{ij} \times 1) \tag{by identity of \(+\)}                                                             \\
						                   & = B_{ij} \tag{by identity of \(\times \)}
					\end{align*}

				\item [\(\mathbf{0} \) annihilates \(M\)] \(A \otimes B = \mathbf{0} \) and \(B \otimes A = \mathbf{0} \) where \(A = \mathbf{0} \) iff \(A_{ij} = 0\) for all \(i\), \(j\).

				      \begin{align*}
					      (A \otimes B)_{ij} & = \sum_{k=1}^{n} A_{ik} \times B_{kj}                            \\
					                         & = \sum_{k=1}^{n} 0 \times B_{kj}\tag{by def. of \(\mathbf{0} \)} \\
					                         & = \sum_{k=1}^{n} 0 \tag{by annihilation prop. of \(0\)}          \\
					                         & = 0
				      \end{align*}

				      \begin{align*}
					      (B \otimes A)_{ij} & = \sum_{k=1}^{n} B_{ik} \times A_{kj}                             \\
					                         & = \sum_{k=1}^{n} B_{kj} \times 0 \tag{by def. of \(\mathbf{0} \)} \\
					                         & = \sum_{k=1}^{n} 0 \tag{by annihilation prop. of \(0\)}           \\
					                         & = 0
				      \end{align*}

				\item [Distribution of multiplication over addition]
				\item[Right Distribution]
					\(A \otimes (B \oplus C) = (A \otimes B) \oplus (A \otimes C)\) iff \((A \otimes (B \oplus C))_{ij} = ((A \otimes B) \oplus (A \otimes C))_{ij}\) for all \(i\), \(j\).

					\begin{align*}
						A \otimes (B \oplus C))_{ij} & = \sum_{k=1}^{n} \big( A_{ik} \times (B_{kj} + C_{kj})\big)                                                            \\
						                             & = \sum_{k=1}^{n} \big((A_{ik} \times B_{kj}) + ( A_{ik} \times C_{kj})\big) \tag{by right distribution of \(\times \)} \\
						                             & = \sum_{k=1}^{n} (A_{ik} \times B_{kj}) + \sum_{k=1}^{n} (A_{ik} \times C_{kj})                                        \\
						                             & = (A \otimes B)_{ij} + (A \otimes C)_{ij}                                                                              \\
						                             & = ((A \otimes B) \oplus (A \otimes C))_{ij}
					\end{align*}

				\item[Left Distribution]
					\((A \oplus B) \otimes C = (A \otimes C) \oplus ( B \otimes C)\) iff \(((A \oplus B) \otimes C)_{ij} = ((A \otimes C)\oplus ( B \otimes C))_{ij}\) for all \(i\), \(j\).

					\begin{align*}
						((A \oplus B) \otimes C)_{ij} & = \sum_{k=1}^{n} \big( (A_{ik} + B_{ik}) \times C_{kj} \big)                                                           \\
						                              & = \sum_{k=1}^{n} \big( (A_{ik} \times C_{kj}) + ( B_{ik} \times C_{kj})\big) \tag{by left distribution of \(\times \)} \\
						                              & = \sum_{k=1}^{n} (A_{ik} \times C_{kj}) + \sum_{k=1}^{n} (B_{ik} \times C_{kj})                                        \\
						                              & = (A \otimes C)_{ij} + (B \otimes C)_{ij}                                                                              \\
						                              & = ((A \otimes C) \oplus (B \otimes C))_{ij}
						\qedhere
					\end{align*}
			\end{description}
	\end{description}
\end{proof}

For simplicity, we will write \(\mathbb{M}\) as \(\mathbb{M} (\mathbb{S} ) = (M(S),\mathbf{0} ,\mathsfbf{1} ,\oplus ,\otimes )\).

\subsection{Choices Semi-ring}
\label{sec:app:choice}

This subsection explains and details how functions into semi-ring coefficients can be used to construct semi-rings, and the interplay between this construction and the matrix semi-ring from the previous subsection.

\begin{lemma}
	\label{lem:functions}
	Given a strong semi-ring \(\mathbb{S} = (S, 0, 1, +, \times )\) and a set \(A\), the tuple \(\mathbb{F} = (F, \mathsf{0} , \mathsf{1} , \boxplus , \boxtimes )\), with
	\begin{itemize}
		\item \(F\) the set of functions from \(A\) to \(S\),
		\item \(\mathsf{0} \) the constant function \(\mathsf{0} (a) = 0\) for all \(a \in A\),
		\item \(\mathsf{1} \) the constant function \(\mathsf{1} (a) = 1\) for all \(a \in A\),
		\item \(\boxplus \) defined componentwise: \((f \boxplus g)(a) = (f(a)) + (g(a))\), for all \(f\), \(g\) in \(F\) and \(a \in A\),
		\item \(\boxtimes \) defined componentwise: \((f \boxtimes g)(a) = (f(a)) \times (g(a))\), for all \(f\), \(g\) in \(F\) and \(a \in A\),
	\end{itemize}
	is a strong semi-ring.
\end{lemma}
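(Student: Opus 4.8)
The plan is to verify each of the four conditions of \autoref{def:semi-ring} by reducing every required identity on $\mathbb{F}$ to the corresponding identity already known to hold in $\mathbb{S}$, exploiting the fact that all operations on $\mathbb{F}$ are defined pointwise. The guiding principle throughout is that two functions $f, g \in F$ are equal if and only if $f(a) = g(a)$ for every $a \in A$; hence each axiom, which asserts an equality of elements of $F$, is established by fixing an arbitrary $a \in A$, unfolding the definitions of $\boxplus$ and $\boxtimes$ at that point, and invoking the matching axiom of $\mathbb{S}$.

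Concretely, for the commutative monoid $(F, \mathsf{0}, \boxplus)$, associativity follows from
\[
((f \boxplus g) \boxplus h)(a) = (f(a) + g(a)) + h(a) = f(a) + (g(a) + h(a)) = (f \boxplus (g \boxplus h))(a),
\]
where the middle equality is associativity of $+$ in $\mathbb{S}$; commutativity and the identity law for $\mathsf{0}$ are obtained the same way from commutativity of $+$ and from $0 + s = s$ in $S$. The monoid $(F, \mathsf{1}, \boxtimes)$ is treated identically, associativity of $\boxtimes$ coming from associativity of $\times$ and the identity law for $\mathsf{1}$ from $1 \times s = s \times 1 = s$. Left and right distributivity of $\boxtimes$ over $\boxplus$ likewise reduce, at each point $a$, to the two distributive laws of $\mathbb{S}$. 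Finally, since $\mathbb{S}$ is \emph{strong}, the identity $(\mathsf{0} \boxtimes f)(a) = 0 \times f(a) = 0 = f(a) \times 0 = (f \boxtimes \mathsf{0})(a)$ holds at every $a$, so $\mathsf{0} \boxtimes f = f \boxtimes \mathsf{0} = \mathsf{0}$ and $\mathbb{F}$ inherits strength.

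There is no genuine obstacle here: the argument is an entirely mechanical \emph{pointwise lifting}, and each of the finitely many axioms is dispatched by a one-line computation after evaluation at a fixed $a$. The only point meriting attention is the last step, where it must be checked that the strength hypothesis on $\mathbb{S}$ is exactly what transports to $\mathbb{F}$: pointwise evaluation carries $0 \times s = s \times 0 = 0$ over verbatim, so $\mathbb{F}$ is strong precisely because $\mathbb{S}$ is, and no hypothesis on the index set $A$ is needed.
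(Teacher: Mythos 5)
Your proof is correct and follows essentially the same route as the paper's: a pointwise lifting in which each semi-ring axiom for $\mathbb{F}$, including the annihilation property for strength, is verified by evaluating at an arbitrary $a \in A$ and invoking the corresponding axiom of $\mathbb{S}$. Nothing is missing.
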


\begin{proof}
	\begin{description}
		\item[ \((F, \mathsf{0} , \boxplus )\) is a commutative monoid] We first prove that \((F, \mathsf{0} , \boxplus )\) is a commutative monoid by showing that it is associative, commutative, and has \(\mathsf{0} \) as identity.
			\begin{description}
				\item[Associativity]
					\begin{align*}
						((f \boxplus g) \boxplus h)(a) & = (f(a) + g(a)) + h(a)                                          \\
						                               & = f(a) + (g(a) + h(a)) \tag{by assoc. of \(+\)}                 \\
						                               & = (f \boxplus (g \boxplus h))(a) \tag{by def. of \(\boxplus \)}
					\end{align*}

				\item[Commutativity]
					\begin{align*}
						(f \boxplus g)(a) & = f(a) + g(a)                                     \\
						                  & = g(a) + f(a)\tag{by commutativity of \(+\)}      \\
						                  & = (g \boxplus f)(a)\tag{by def. of \(\boxplus \)}
					\end{align*}

				\item[Identity element is \(0\)]
					\begin{align*}
						(\mathsf{0} \boxplus f)(a) & = \mathsf{0} (a) + f(a)                     \\
						                           & = 0 + f(a) \tag{by def. of \(\mathsf{0} \)} \\
						                           & = f(a) \tag{by identity prop of \(+\)}
					\end{align*}
			\end{description}

		\item[\((F, 1, \boxtimes )\) is a monoid] We now prove that \((F, 1, \boxtimes )\) is a monoid by showing that it is associative and has \(\mathsf{1} \) as identity.
			\begin{description}
				\item[Associativity]
					\begin{align*}
						((f \boxtimes g) \boxtimes h)(a) & = (f(a) \times g(a)) \times h(a)                                   \\
						                                 & = f(a) \times (g(a) \times h(a)) \tag{by assoc. of \(\times\) }    \\
						                                 & = (f \boxtimes (g \boxtimes h))(a) \tag{by def. of \(\boxtimes \)}
					\end{align*}

				\item[Identity element is \(1\)]
					\begin{align*}
						(\mathsf{1} \boxtimes f)(a) & = \mathsf{1} (a) \times f(a)                    \\
						                            & = 1 \times f(a)\tag{by def. of \(\mathsf{1} \)} \\
						                            & = f(a) \tag{by identity prop of \(\times\) }
					\end{align*}
			\end{description}

		\item [Distribution of multiplication over addition]
		      We conclude by proving that \(\boxtimes \) distributes over \(\boxplus \).
		      \begin{description}
			      \item[Right Distribution]
				      \begin{align*}
					      (f \boxtimes (g \boxplus h))(a) & = f(a) \times (g(a) + h(a))                                                          \\
					                                      & = (f(a) \times g(a)) + (f(a) \times h(a)) \tag{by right distribution of \(\times \)} \\
					                                      & = ((f \boxtimes g) \boxplus (f \boxtimes h))(a)
				      \end{align*}

			      \item[Left Distribution]
				      \begin{align*}
					      ((f \boxplus g) \boxtimes h)(a) & = (f(a) + g(a)) \times h(a)                                                        \\
					                                      & = (f(a) \times h(a)) + (g(a) \times h(a))\tag{by left distribution of \(\times \)} \\
					                                      & = ((f \boxtimes h) \boxplus (g \boxtimes h))(a)
				      \end{align*}
		      \end{description}

		\item[\(0\) annihilates \(F\)]
			\begin{align*}
				(\mathsf{0} \boxtimes f)(a) & = \mathsf{0} (a) \times f(a)                     \\
				                            & = 0 \times f(a) \tag{by def. of \(\mathsf{0} \)} \\
				                            & = 0 \tag{by annihilation prop of \(0\)}
			\end{align*}

			\begin{align*}
				(f \boxtimes \mathsf{0} )(a) & = f(a) \times \mathsf{0} (a)                     \\
				                             & = f(a) \times 0 \tag{by def. of \(\mathsf{0} \)} \\
				                             & = 0 \tag{by annihilation prop of \(0\)}
			\end{align*}

	\end{description}
\end{proof}

For simplicity, we will write \(\mathbb{F} \) as \(A \to \mathbb{S} = (A \to S, 0, 1, +, \times )\).

\begin{definition}
	\label{def:iso}
	We say two semi-rings \(\mathbb{S} =(S, 0, 1, +, \times )\) and \(\mathbb{T} = (T, \mathsf{0} , \mathsf{1} , \boxplus , \boxtimes )\) are \emph{isomorphic} and write \(\mathbb{S} \cong \mathbb{T} \) if there exists \(g: S \to T\) such that

	\begin{itemize}
		\item \(g\) is a bijection,
		\item \(g(0) = \mathsf{0} \),
		\item \(g(1) = \mathsf{1} \),
		\item \(g(s_1 + s_2) = g(s_1) \boxplus g(s_2)\) for all \(s_1, s_2 \in S\)
		\item \(g(s_1 \times s_2) = g(s_1) \boxtimes g(s_2)\) for all \(s_1, s_2 \in S\)
	\end{itemize}
	For simplicity, we write \(g: \mathbb{S} \to \mathbb{T} \) for such morphisms.
\end{definition}

\begin{lemma}
	\label{lem:semi-ring-iso}
	For all set \(A\) and strong semi-ring \(\mathbb{S} \), \(\mathbb{M} (A \to \mathbb{S} ) \cong A \to \mathbb{M} (\mathbb{S} )\).
\end{lemma}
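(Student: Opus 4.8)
The plan is to exhibit the natural ``uncurrying'' (index-swapping) map and to verify that it satisfies the five conditions of \autoref{def:iso}. Concretely, an element of \(\mathbb{M}(A\to\mathbb{S})\) is an \(n\times n\) matrix \(M\) each of whose coefficients \(M_{ij}\) is a function \(A\to S\), whereas an element of \(A\to\mathbb{M}(\mathbb{S})\) is a function sending each \(a\in A\) to an \(n\times n\) matrix over \(S\). Both are just the datum \((i,j,a)\mapsto(\text{element of }S)\) presented in two different orders, so I would define \(g:\mathbb{M}(A\to\mathbb{S})\to(A\to\mathbb{M}(\mathbb{S}))\) by \(g(M)(a)_{ij}=M_{ij}(a)\). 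Its inverse is the symmetric reordering \(h(\phi)_{ij}(a)=\phi(a)_{ij}\); since \(g\) and \(h\) are visibly mutually inverse, \(g\) is a bijection, discharging the first condition.

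For the unit conditions I would simply unfold both sides. The zero of \(\mathbb{M}(A\to\mathbb{S})\) is the matrix all of whose entries are the constant function \(0\), and \(g\) sends it to the function \(a\mapsto\zmat\), which is exactly the zero of \(A\to\mathbb{M}(\mathbb{S})\). The identity is handled identically: the \(\mathsfbf{1}\) of \(\mathbb{M}(A\to\mathbb{S})\) carries the constant function \(1\) on the diagonal and the constant \(0\) off it, whose image under \(g\) is the constant function \(a\mapsto\mathsfbf{1}\), the unit of \(A\to\mathbb{M}(\mathbb{S})\).

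The two algebraic conditions carry the actual content, and multiplication is the main obstacle. The key observation, which I would isolate and state first, is that \emph{both} operations of the function semi-ring \(A\to\mathbb{S}\) are defined pointwise (\autoref{lem:functions}), so evaluation at a fixed \(a\in A\) is itself a semi-ring homomorphism \(A\to\mathbb{S}\to\mathbb{S}\). For addition this gives immediately \(g(M\oplus N)(a)_{ij}=(M\oplus N)_{ij}(a)=M_{ij}(a)+N_{ij}(a)=(g(M)(a)\oplus g(N)(a))_{ij}\). For multiplication I would compute, for fixed \(i,j\) and \(a\),
\[
  g(M\otimes N)(a)_{ij}=(M\otimes N)_{ij}(a)=\Bigl(\sum_{k=1}^{n}M_{ik}\times N_{kj}\Bigr)(a)=\sum_{k=1}^{n}M_{ik}(a)\times N_{kj}(a)=(g(M)(a)\otimes g(N)(a))_{ij},
\]
where the third equality is precisely where evaluation at \(a\) commutes past the sum and the product, using that \(+\) and \(\times\) in \(A\to\mathbb{S}\) act pointwise.

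The delicate point, and the only place where care is genuinely required, is keeping the three layers of structure straight: the matrix indices \(i,j,k\), the function argument \(a\), and the underlying operations of \(\mathbb{S}\). In particular one must check that the sum \(\sum_k M_{ik}\times N_{kj}\), which lives in \(\mathbb{M}(A\to\mathbb{S})\) and is therefore a sum of \emph{functions}, really does evaluate entrywise to the corresponding sum in \(\mathbb{S}\); this is exactly the pointwise definition of the function semi-ring, and it is what makes the whole argument routine rather than computational. Once the displayed identity is established for a generic entry \((i,j)\) and arbitrary \(a\), all five conditions of \autoref{def:iso} hold, and the isomorphism \(\mathbb{M}(A\to\mathbb{S})\cong A\to\mathbb{M}(\mathbb{S})\) follows.
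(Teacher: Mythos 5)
Your proposal is correct and takes essentially the same route as the paper's own proof: the paper defines exactly the same index-swapping maps \(g(M)(a)_{ij}=M_{ij}(a)\) and \(g^{-1}(\phi)_{ij}(a)=\phi(a)_{ij}\), verifies they are mutually inverse, and checks the unit, additivity, and multiplicativity conditions of \autoref{def:iso} by the same pointwise computations, including the entrywise evaluation of \(\sum_{k} M_{ik}\times N_{kj}\). There is nothing to fix.
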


\begin{proof}
	First, observe that by Lemmas~\ref{lem:matrices} and \ref{lem:functions}, both \(A \to \mathbb{M} (\mathbb{S} )\) and \(\mathbb{M} (A \to \mathbb{S} )\) are strong semi-rings, and we write \(0_f\) (resp.\ \(0_M\)) and \(1_f\) (resp.\ \(1_M\)) for the \(0\) and \(1\) elements of \(A \to \mathbb{M} (\mathbb{S} )\) (resp.\ of \(\mathbb{M} (A \to \mathbb{S} )\)).
	Now we have to prove that we can construct a bijection \(g : M(A \to S) \to (A \to M(S))\) that respects the conditions of \autoref{def:iso}.

	We define \(g\) and \(g^{-1}\) at the same time, then show that they are indeed inverses:
	\begin{description}
		\item[\(g : M(A \to S) \to (A \to M(S))\)]
			Given \(M \in M(A \to S)\) of size \(n \times n\), we let \(g(M) \in A \to M(S)\) be the function that maps \(a \in A\) to \(M\) where the same argument \(a\) has been applied to the functions \(f_{1,1}, \hdots, f_{n,n}\).
			Graphically:

			\[
				g(M)a =
				g(\begin{pmatrix}
					M_{1,1} & \ldots & M_{1,n} \\
					\vdots  & \ddots & \vdots  \\
					M_{n,1} & \ldots & M_{n,n}
				\end{pmatrix})a =
				\begin{pmatrix}
					M_{1,1}a & \ldots & M_{1,n}a \\
					\vdots   & \ddots & \vdots   \\
					M_{n,1}a & \ldots & M_{n,n}a
				\end{pmatrix}
			\]

			Below, we write \(f_M\) for \(g(M)\).

		\item[\(g^{-1}: (A \to M(S)) \to M(A \to S)\)]
			Given \(f \in A \to M(S)\), we define \(g^{-1}(f) \in M(A \to S)\) to be the matrix of size \(n \times n\), for \(n \times n\) the size of the matrix returned by \(f\), such that \((g^{-1}(f))_{i, j}\) is the function that maps \(a \in A\) to \((f(a))_{i,j}\) for all \(i\), \(j\).
			Graphically:

			\[
				g^{-1}(f)a =
				\begin{pmatrix}
					(fa)_{1,1} & \ldots & (fa)_{1,n} \\
					\vdots     & \ddots & \vdots     \\
					(fa)_{n,1} & \ldots & (fa)_{n,n}
				\end{pmatrix}
			\]

			Below, we write \(M_f\) for \(g^{-1}(f)\).
	\end{description}

	\begin{description}
		\item[\(g\) is a bijection]
			We first prove that \(g \circ g^{-1} = g^{-1} \circ g = \id\).
			\begin{description}
				\item[\((g^{-1} \circ g)(M) = M\)]
					\begin{align*}
						(g^{-1} \circ g)(M) & = g^{-1}(g(M))                                                  \\
						                    & = g^{-1}(f_M) \tag{\small{where \((f_M(a))_{ij} = M_{ij}(a)\)}} \\
						                    & = M
					\end{align*}

				\item[\((g \circ g^{-1})(f) = f\)]
					\begin{align*}
						(g \circ g^{-1})(f) & = g(g^{-1}(f))                                             \\
						                    & = g(M_f) \tag{\small{where \((M_f)_{ij}a = (f(a))_{ij}\)}} \\
						                    & = f
					\end{align*}
			\end{description}

		\item[\(g(0_M) = 0_f\)]
			Let \(f = g(0_M)\), then \(f = 0_f\) iff \(f(a)_{ij} = 0_{\mathbb{S}}\) for all \(i\), \(j\).
			\begin{align*}
				f(a)_{ij} & = (0_M)_{ij}(a)                           \\
				          & = 0_f(a)\tag{by def. of \(0_M\)}          \\
				          & = 0_{\mathbb{S}} \tag{by def. of \(0_f\)}
			\end{align*}

		\item[\(g(1_M) = 1_f\)]
			Let \(f = g(1_M)\), then \(f = 1_f\) iff \(f(a)_{ij} = 1_{\mathbb{S}} \) for all \(i=j\) and \(f(a)_{ij} = 0_{\mathbb{S}} \) otherwise.
			\begin{description}
				\item[Case 1: \(i = j\)]
					\begin{align*}
						f(a)_{ij} & = (1_M)_{ij}(a)                           \\
						          & = 1_f(a)\tag{by def. of \(1_M\)}          \\
						          & = 1_{\mathbb{S}} \tag{by def. of \(1_f\)}
					\end{align*}

				\item[Case 2: \(i \neq j\)]
					\begin{align*}
						f(a)_{ij} & = (1_M)_{ij}(a)                           \\
						          & = 0_f(a)\tag{by def. of \(1_M\)}          \\
						          & = 0_{\mathbb{S}} \tag{by def. of \(0_f\)}
					\end{align*}
			\end{description}

		\item[\(g(M_1 + M_2) = g(M_1) + g(M_2)\)]
			\begin{align*}
				     & g(M_1 + M_2) = g(M_1) + g(M_2)                                              \\
				\iff & f_{M_1 + M_2} = f_{M_1} + f_{M_2}                                           \\
				\iff & f_{M_1 + M_2}(a) = (f_{M_1} + f_{M_2})(a)                                   \\
				\iff & f_{M_1 + M_2}(a) = f_{M_1}(a) + f_{M_2} (a)                                 \\
				\iff & (f_{M_1 + M_2}(a))_{ij} = (f_{M_1}(a) + f_{M_2} (a))_{ij}                   \\
				\iff & (M_1 + M_2)_{ij}(a) = (M_1)_{ij}(a)+ (M_2)_{ij}(a) \tag{by assoc. of \(+\)}
			\end{align*}

		\item[\(g(M_1 \times M_2) = g(M_1) \times g(M_2)\)]
			\begin{align*}
				     & g(M_1 \times M_2) = g(M_1) \times g(M_2)                                                                                                      \\
				\iff & f_{M_1 \times M_2} = f_{M_1} \times f_{M_2}                                                                                                   \\
				\iff & f_{M_1 \times M_2}(a) = (f_{M_1} \times f_{M_2})(a)                                                                                           \\
				\iff & f_{M_1 \times M_2}(a) = (f_{M_1})(a) \times (f_{M_2})(a)                                                                                      \\
				\iff & (f_{M_1 \times M_2}(a))_{ij} = ((f_{M_1})(a) \times (f_{M_2})(a))_{ij}                                                                        \\
				\iff & (\sum_{k=1}^{n} (M_1)_{ik} \times (M_2)_{kj})(a) = \sum_{k=1}^{n} (M_1)_{ik}(a) \times (M_2)_{kj}(a) \tag{by assoc. of \(+\) and \(\times \)}
			\end{align*}
			\qedhere
	\end{description}
\end{proof}

\subsection{Partiality}
\label{sec:app:partiality}

In our improvement of the analysis, we add an \(\infty\) element to the mwp-semi-ring, but reason abstractly below with an arbitrary semi-ring and a \(\bot\) element.

\begin{lemma}
	Given a strong semi-ring \(\mathbb{S} = (S,0,1,+,\times )\) and an element \(\bot \notin S\), \(\mathbb{S}^{\bot} = (S \cup \{\bot\},0,1,+^{\bot},\times^{\bot} )\) with, for all \(a\), \(b \in S \cup \{\bot\}\),
	\begin{align*}
		a +^{\bot} b      & =\begin{dcases*}
			a + b & if \(a, b \neq \bot\) \\
			\bot  & otherwise
		\end{dcases*} \\
		a \times^{\bot} b & =\begin{dcases*}
			a \times b & if \(a, b \neq \bot\) \\
			\bot       & otherwise
		\end{dcases*}
	\end{align*}
	is a semi-ring.
\end{lemma}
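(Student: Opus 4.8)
The plan is to verify directly the three defining conditions of a semi-ring from \autoref{def:semi-ring} for \(\mathbb{S}^{\bot}\), while making explicit that the fourth condition (annihilation) is \emph{not} among the claims: indeed \(0 \times^{\bot} \bot = \bot \neq 0\) since \(0 \in S\) but \(\bot \notin S\), so \(0 \neq \bot\). This failure of strength is exactly the feature the construction is designed to provide, so the statement asserts only that \(\mathbb{S}^{\bot}\) is a semi-ring, and I would keep the proof honest on this point rather than attempt the annihilation axiom.

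The observation that organizes the entire argument is a \emph{propagation} property of \(\bot\). By their very definitions, both \(+^{\bot}\) and \(\times^{\bot}\) return \(\bot\) precisely when at least one of their two arguments is \(\bot\), and otherwise they agree with \(+\) and \(\times\) of \(\mathbb{S}\). First I would record the immediate consequence, by induction on the structure of an expression built from \(+^{\bot}\), \(\times^{\bot}\) and elements of \(S \cup \{\bot\}\): such an expression evaluates to \(\bot\) if and only if at least one of its leaves is \(\bot\), and otherwise it evaluates to the value computed by the corresponding expression over \(\mathbb{S}\) (in which \(+^{\bot}\) and \(\times^{\bot}\) are literally \(+\) and \(\times\)). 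This fact collapses every axiom into just two cases.

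With this in hand, each required identity \(L = R\) is checked uniformly, the relevant identities being associativity and commutativity of \(+^{\bot}\) with unit \(0\), associativity of \(\times^{\bot}\) with unit \(1\), and the two distributivity laws. In every case the same variables occur on both sides of the identity. Hence: (i) if all of these variables are assigned elements of \(S\), then by the propagation property both sides reduce to the corresponding expressions over \(\mathbb{S}\), and the identity holds because \(\mathbb{S}\) is a (strong) semi-ring; (ii) if some variable is \(\bot\), then that variable occurs on both \(L\) and \(R\), so both sides evaluate to \(\bot\) and the identity is trivial. For the unit laws one uses additionally that the constants \(0\) and \(1\) lie in \(S\) and so never spuriously introduce \(\bot\); for instance \(0 +^{\bot} a = a\) because \(0 + a = a\) in \(\mathbb{S}\) when \(a \in S\), while both sides equal \(\bot\) when \(a = \bot\), and symmetrically \(1 \times^{\bot} a = a\).

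I do not expect a genuine obstacle here; the only step demanding a little care is distributivity, \(a \times^{\bot}(b +^{\bot} c) = (a \times^{\bot} b) +^{\bot}(a \times^{\bot} c)\), where a \(\bot\) may be "hidden" inside a subexpression on either side. The propagation lemma handles this cleanly: both sides equal \(\bot\) exactly when at least one of \(a,b,c\) is \(\bot\), and otherwise both equal the \(\mathbb{S}\)-value \(a \times (b + c) = a \times b + a \times c\). This reduces each distributivity law to the one already available in \(\mathbb{S}\) and finishes the verification of all three semi-ring axioms.
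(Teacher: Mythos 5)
Your proof is correct and takes essentially the same approach as the paper, which simply declares the verification \enquote{immediate}: your propagation-of-\(\bot\) observation and the resulting two-case check of each axiom is exactly the routine direct verification being alluded to. You also rightly note, just as the paper's proof does, that \(\mathbb{S}^{\bot}\) fails to be strong since \(0 \times^{\bot} \bot = \bot \neq 0\).
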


\begin{proof}
	The proof is immediate, but note that \(\mathbb{S}^{\bot}\) is not strong, as \(\bot \times 0 = \bot\).
\end{proof}

A good intuition on this construction comes from partial functions.
Indeed, we can define \(A \rightharpoonup \mathbb{S}\) as the semi-ring of partial functions from \(A\) to \(\mathbb{S}\), i.e.\ of functions from \(A\) to \(\mathbb{S}^{\bot}\).
Furthermore, if we identify a matrix in \(\mathbb{M} (\mathbb{S}^{\bot})\) where at least a coefficient is \(\bot \) with the matrix \(\bot\), then we get that \(\mathbb{M} (A \rightharpoonup \mathbb{S} ) \cong A \rightharpoonup \mathbb{M} (\mathbb{S} )\).
However, note that none of those semi-rings are strong.

\end{document}